  \providecommand\BibTeX{{%
    \normalfont B\kern-0.5em{\scshape i\kern-0.25em b}\kern-0.8em\TeX}}}
\definecolor{lgray}{gray}{0.9}
\DeclareMathAlphabet{\pazocal}{OMS}{zplmf}{m}{n}
\newcommand{\mcal}[1]{\pazocal{#1}}
\newcommand{\bcal}[1]{\mathcal{#1}}
\newcommand{\rulename}[1]{\mbox{\textsc{#1}}}
\newcommand{\qt}[1]{``{#1}"}
\newcommand{\rTo}[1]{\xrightarrow{#1}}
\newcommand{\true}{{\sf true}}
\newcommand{\false}{{\sf false}}
\newlength{\arrow}
\newcounter{sqindex}
 \newcommand{\rom}[1]{ \textup{(\lowercase\expandafter{\romannumeral#1})}}
\newcommand \Until      {\mathbin{\mcal{U}\kern-.1em}}
\newcommand \Release     {\mathbin{\mcal{R}\kern-.1em}}
\newcommand \WaitFor    {\mathbin{\mcal{W}}}
\newcommand \Since      {\mathbin{\mcal{S}\kern-.08em}}
\newcommand \g    {\mathsf{{G}\kern.08em}}
\newcommand \f    {\mathsf{{F}\kern.08em}}
\newcommand \UntilHat   {\mathbin{\LTLhat{\mcal{U}}\kern-.1em}}
\newcommand \SinceHat   {\mathbin{\LTLhat{\mcal{S}}\kern-.08em}}
\newcommand \Impl       {\mathbin{\rightarrow}}
\newcommand \Iff        {\mathbin{\leftrightarrow}}
\renewcommand \models   {\mathrel{\raisebox{-.1em}{$\vDash$}}}
\newcommand \smodels    {\models\kern-.5em\models}
\renewcommand \phi      {\varphi}
\newcommand \ltl        {\textsc{ltl}\xspace}
\newcommand \ctl				{\textsc{ctl}\xspace}
\newcommand \ltal       {\textsc{ltol}\xspace}
\newcommand \pdl       {\textsc{pdl}\xspace}
\newcommand \pspace       {\textsc{pspace}\xspace}
\newcommand \expspace       {\textsc{expspace}\xspace}
\newcommand \nlog       {\textsc{nlog}\xspace}
\newcommand \B        	{\mcal{B}\xspace}
\newcommand \bb        	{{\Bbb B}\xspace}
\newcommand{\set}[1]{\{{#1}\}}
\newcommand{\conf}[1]{\langle{#1}\rangle}
\newcommand{\band}[3]{\bigwedge\limits_{#1}^{#2}{#3}}
\newcommand{\bor}[3]{\bigvee\limits_{#1}^{#2}{#3}}
\newcommand{\bpcup}[2]{\biguplus\limits_{#1}{#2}}
\newcommand{\func}[3]{{#1}:{#2}\rightarrow{#3}}
\newcommand{\pfunc}[3]{{#1}^{#2}_{#3}}
\newcommand{\such}{{\bf\mathsf{s.t.}}\ }
\newcommand{\toall}{\star}
\newcommand{\cstate}[2]{{s}^{#1}_{#2}}
\newcommand{\sstate}[1]{{s}_{#1}}
\newcommand{\vardom}[1]{\mathsf{Dom}({#1})}
\newcommand{\vdom}{\vardom{v}}
\newcommand{\compdom}[1]{\prod_{v\in V_{#1}}\vdom}
\newcommand{\sdom}{\prod_{i}\compdom{i}}
\newcommand{\tuple}[1]{\left(#1\right)}
\newcommand{\anglebr}[1]{\left [#1\right] }
\def\<#1>{\mathinner{\langle#1\rangle}}
\newcommand{\bnxt}[1]{\Biggl<{#1}\Biggr>}
\newcommand{\pred}{\pi}
\newcommand{\sysvar}{\mathcal{V}}
\newcommand{\chan}{ch}
\newcommand{\schan}{\rulename{ch}} 
\newcommand{\id}{i}
\newcommand{\cv}{cv}
\newcommand{\scv}{\rulename{cv}} 
\newcommand{\dat}{d}
\newcommand{\datfun}{{\bf d}}
\newcommand{\sdat}{\rulename{d}} 
\newcommand{\coma}{,\ }
\newcommand{\trans}{\mcal{T}}
\newcommand{\exis}[1]{\bullet^{\exists}{#1}}
\newcommand{\all}[1]{\bullet^{\forall}{#1}}
\newcommand{\nxt}[1]{\langle{#1}\rangle}
\newcommand{\alws}[1]{[{#1}]}
\newcommand{\tracevar}[1]{\rho_{#1}}
\newcommand{\transmain}{\delta_{\phi}}
\newcommand{\transaux}{f}
\newcommand{\size}[1]{|{#1}|}
\newcommand{\Exp}[1]{2^{#1}}
\newcommand{\dexp}[1]{2^{2^{#1}}}
\newcommand{\bigo}[1]{\bcal{O}({#1})}
\newcommand{\lang}[1]{L_{\omega}({#1})}
\newcommand{\msf}[1]{\mathsf{#1}}
\newcommand{\rcp}{{{\rulename{ReCiPe}}}\xspace}
\renewcommand{\comment}[1]{}
\newcommand{\keep}{\mbox{\sc keep}}
\newcommand{\agent}[1]{\rulename{{#1}}}
\newcommand{\lineagent}{\agent{Line}}
\newcommand{\taagent}{\agent{t1}}
\newcommand{\tbagent}{\agent{t2}}
\newcommand{\tcagent}{\agent{t3}}
\newcommand{\typecvar}{{\scriptstyle@\msf{type}}}
\newcommand{\assigncvar}{{\scriptstyle@\msf{asgn}}}
\newcommand{\readycvar}{{\scriptstyle@\msf{rdy}}}
\newcommand{\msgdvar}{\rulename{msg}}
\newcommand{\lnkdvar}{\rulename{lnk}}
\newcommand{\nodvar}{\rulename{no}}
\newcommand{\stlvar}{\msf{st}}
\newcommand{\lnklvar}{\msf{lnk}}
\newcommand{\prdlvar}{\msf{prd}}
\newcommand{\stagelvar}{\msf{stage}}
\newcommand{\stagebvar}{\msf{step}}
\newcommand{\typelvar}{\msf{ltype}}
\newcommand{\assignlvar}{\msf{lasgn}}
\newcommand{\readylvar}{\msf{lrdy}}
\newcommand{\typebvar}{\msf{btype}}
\newcommand{\assignbvar}{\msf{basgn}}
\newcommand{\readybvar}{\msf{brdy}}
\newcommand{\nolvar}{\msf{no}}
\newcommand{\val}[1]{\rulename{{\scriptsize\tt{#1}}}}
\newcommand{\taval}{\val{t1}}
\newcommand{\tbval}{\val{t2}}
\newcommand{\tcval}{\val{t3}}
\newcommand{\teamval}{\val{team}}
\newcommand{\assembleval}{\val{asmbl}}
\newcommand{\formval}{\val{form}}
\newcommand{\pendval}{\val{pnd}}
\newcommand{\enval}{\val{end}}
\newcommand{\startval}{\val{strt}}
\begin{document}
\fancyhead{}
\title{Reconfigurable Interaction for MAS Modelling}
\titlenote{This research is funded by the ERC consolidator grant D-SynMA under 
the European Union's Horizon 2020 research and innovation programme (grant 
agreement No 772459).}

\author{Yehia Abd Alrahman}
\orcid{0000-0002-4866-6931}
\affiliation{%
    \institution{University of Gothenburg}
  \city{Gothenburg} 
  \state{Sweden} 
}
\author{Giuseppe Perelli}
\orcid{0000-0002-8687-6323}
\affiliation{%
  \institution{Sapienza University of Rome}
  \city{Rome} 
  \state{Italy} 
}

\author{Nir Piterman}
\orcid{0000-0002-8242-5357}
\affiliation{%
  \institution{University of Gothenburg \and University of Leicester}
  \city{Gothenburg} 
  \state{Sweden} 
}
\begin{abstract}
  We propose a formalism to model and reason about multi-agent systems.
        We allow agents to interact and communicate in different modes
        so that they
        can pursue joint tasks;
	agents may dynamically synchronize, exchange data, adapt their 
	behaviour, and reconfigure their communication interfaces.
	The formalism defines a local behaviour based on shared variables and a 
	global one based on message passing.
	We extend \ltl to be able to reason explicitly about the intentions
	of the different agents and their interaction protocols. 
	We also study the complexity of satisfiability and 
	model-checking of this extension.
\end{abstract}

%

\keywords{Agent Theories and Models, Logics for Agent
  Reasoning, Verification of Multi-Agent Systems} 

\maketitle


\section{Introduction}\label{sec:intro}

In recent years formal modelling of multi-agent systems (MAS) and their
analysis through model checking has received much attention~\cite{Woo02,LQR17}.
Several mathematical formalisms have been suggested to represent the
behaviours of such systems and to reason about the strategies that
agents exhibit \cite{LQR17,AHK02}. 
For instance, 
modelling languages, such as RM \cite{AH99b,GHW17} and ISPL \cite{LQR17}, are used to
enable efficient analysis by representing these systems through the usage of
BDDs.
Temporal logics have been also extended and adapted (e.g., with Knowledge support
\cite{FHMV95} and epistemic operators \cite{GochetG06}) specifically to support
multi-agent modelling \cite{GL13}.
Similarly, logics that support reasoning about the intentions and
strategic abilities of such agents have been used and extended
\cite{CHP10,MMPV14}.

These works are heavily influenced by the formalisms used for verification
 (e.g., Reactive Modules~\cite{AH99b,AG04},
concurrent game structures~\cite{AHK02}, and interpreted
systems~\cite{LQR17}).
They 
rely on shared memory to implicitly model
interactions.
It is generally agreed that explicit message passing is more
appropriate to model interactions among distributed agents because of its
scalability~\cite{cougaar,mno}.
However, the mentioned 
formalisms trade the advantages of message passing for abstraction,
and abstract message exchange by controlling the visibility of state
variables of the different agents.
Based on an early result, where a compilation from shared memory to message
passing was provided \cite{stop}, it was believed that a shared memory model 
is a higher level abstraction of distributed systems. 
However, this result holds only in specific cases and
under assumptions that practically proved to be unrealistic, see~\cite{mm}.
Furthermore, the compositionality of shared memory approaches is 
limited and the supported interaction interfaces are in general not
very flexible~\cite{BBDM19}. 
Alternatively, message passing formalisms~\cite{pi1} are very
compositional and support flexible interaction interfaces.
However, unlike shared memory formalisms, they do not accurately
support awareness capabilities, where an agent may instantaneously
inspect its local state and adapt its behaviour while interacting%

To combine the benefits of both approaches recent developments~\cite{APUS,mm} 
suggest adopting hybrids, to accurately represent actual distributed systems, e.g., ~\cite{rdma,MathewsVCOBD15}. 
We propose a hybrid formalism to model and reason about
distributed multi-agent systems.
A system is represented as a set of agents (each with local state),
executing concurrently and
only interacting by message exchange. 
Inspired by multi-robot systems, 
e.g., Kilobot~\cite{kilobot} and Swarmanoid~\cite{swarm},
agents are additionally able to sense their local states and
partially their surroundings.
Interaction is driven by message passing following 
the interleaving semantics of~\cite{pi1};
in that only one agent may send a message at a time while other agents
may react to it.
To support meaningful interaction among agents \cite{WK16},
messages are not mere synchronisations, but carry data that might
be used to influence the behaviour of receivers. 

Our message exchange is adaptable and reconfigurable.
Thus, agents determine how to communicate and with whom.
Agents interact on links that change their utility
based on the needs of interaction at a given stage.
Unlike existing message-passing mechanisms, which use static 
notions of network connectivity to establish interactions, our mechanisms 
allow agents to specify receivers 
using logical formulas.
These formulas are interpreted over the evolving local
states of the different agents and thus provide a natural way to establish 
reconfigurable interaction interfaces (for example, limited range
communication \cite{MathewsVCOBD15}, messages destined for particular agents \cite{info19}, etc.).

The advantages of our formalism are threefold. We provide more realistic
 models that are close to their distributed implementations, and how actual 
 distributed MAS are developed, e.g.,~\cite{swarmdesign}. We provide a 
 modelling convenience for high level interaction features of MAS
  (e.g., coalition formation, collaboration, self-organisation, etc),
  that otherwise have to be hard-coded tediously in existing formalisms.  
Furthermore, we decouple the individual behaviour of agents from their interaction
protocols to facilitate reasoning about either one separately.
%

In addition, we extend \ltl to characterise messages and their
targets.
This way we allow
reasoning about the intentions of agents in communication.
Our logic can refer directly to the interaction protocols.
Thus the interpretation of a formula 
incorporates information
about the causes of assignments to variables and the flow of the interaction
protocol.
We also study the complexity of satisfiability and
Model-checking for our logic.


\section{An informal overview}\label{sec:overview}
We use a collaborative-robot scenario to informally illustrate the
 distinctive features of our formalism and we later formalise it in Section~\ref{sec:exp}.
The scenario is based on Reconfigurable Manufacturing Systems (RMS)~\cite{rms1,rms2},
where assembly product lines coordinate autonomously with different
types of robots to produce products.

In our formalism, each agent has a local state
consisting of a set of variables whose values may change due to
either contextual conditions or side-effects of
interaction. The external behaviour of an agent is only represented by the messages it
exposes to other agents while the local one is represented by changes to its
state variables. These variables are initialised by initial conditions
and updated by send- and receive- transition relations.
In our example, a product-line agent initiates different production
procedures based on the assignment to its product variable $\qt{\msf{prd}}$,
which is set by the operator, while it controls the progress of its status variable
$\qt{\msf{st}}$ based on interactions with other robots.
Furthermore, a product-line agent is characterised:
(1) externally only by the recruitment and assembly messages it  
sends to other robots and
(2) internally by a sequence of assignments to its local variables.

Before we explain the send- and receive- transition relations and show
the dynamic reconfiguration of communication interfaces we need to
introduce a few additional features. 
We assume that there is an agreed set of \emph{channels/links} $\schan$ that includes a unique
broadcast channel $\star$.
Broadcasts have non-blocking send and blocking receive while
multicasts have blocking send and receive.
In a broadcast, receivers (if exist) may anonymously receive a
message when they are interested in its values and when they satisfy
the send guard.
Otherwise, the agent does not participate in the interaction either
because they cannot (do not satisfy the guard) or because they are not
interested (make an idle transition).
In multicast, all agents connected to the multicast channel must
participate to enable the interaction.
For instance, recruitment messages are broadcast because a line agent
assumes that there exist enough robots to join the team while assembly messages
are multicast because they require that the whole connected team is ready to
assemble the product.

Agents dynamically decide (based on local state) whether they can
use/connect-to multicast channels while the broadcast channel is always available. 
Thus,
initially, agents may not be connected to any channel, except for the
broadcast one $\msf{\star}$.
These channels may be learned using broadcast messages
and thus a structured communication interface can be built at run-time, starting
from a (possibly) flat one. 

Agents use messages to send selected data and specify
how and to whom.
Namely, the values in a message specify what is exposed to the others;
the channel specifies how to coordinate with others; and a send
guard specifies the target.
Accordingly, each message carries an assignment to a set of agreed
\emph{data variables} $\sdat$, i.e., the exposed data;  a
channel $\msf{\chan}$; and a send guard $\pfunc{g}{s}{}$.
In order to write meaningful send guards, we assume a set of
\emph{common variables} $\scv$ that each agent stores locally and
assigns its own information (e.g., the type of agent, its location,
its readiness, etc.). Send guards are expressed in terms of conditions
on these variables and are evaluated per agent based on their assigned
local values. 
Send guards are parametric to the local state of the sender
and specify what assignments to common variables a potential receiver
must have.
For example, an agent may send a dedicated link name to a selected set
of agents by assigning a data variable in the communicated message and
this way a coalition can be built incrementally at run-time.
In our RMS, the send guard of the recruitment message specifies
the types of the targeted robots while the data values expose the
number of required robots per type and a dedicated multicast link to
be used to coordinate the production. 

Targeted agents may use incoming messages to update their
states, reconfigure their interfaces, and/or adapt their behaviour.
In order to do so, however, agents are equipped with receive guards $\pfunc{g}{r}{}$; 
that might be parametrised to local variables and channels, and thus
dynamically determine if an agent is connected to a given channel. 
The interaction among different agents is then derived based on send- and receive-
transition relations.
These relations are used to decide when to send/receive a message and what are the
side-effects of interaction. 
Technically, every agent has a send and a receive transition relation.
Both relations are parameterised by the state variables of the agent,
the data variables transmitted on the message, and by the channel name.
A sent message is interpreted as a joint transition between the send
transition relation of the sender and the receive transition relations of all
the receivers.
For instance, a robot's receive guard specifies that other than
the broadcast link it is also connected to a multicast link that matches the current value of
its local variable $\qt{\msf{lnk}}$. The robot then uses its receive transition relation
to react to a recruitment message, for instance, by assigning to its $\qt{\msf{lnk}}$ the
link's data value from the message.

Furthermore, in order to send a message the following has to happen.
The send transition relation of the sender must hold on: a given state of the sender,
a channel name, and an assignment to data variables.
If the message is broadcast,
all agents whose assignments to common variables satisfy the
send guard jointly receive the message, the others discard it.
If the message is multicast, all connected agents must satisfy the
send guard to enable the transmission (as otherwise they block the
message).
In both cases, 
sender and receivers execute their send- and receive-transition
relations jointly.
The local side-effect of the message takes into account the origin
local state, the channel, and the data.
In our example, a (broadcast) recruitment 
message is received by all robots that are not assigned to other teams (assigned ones  
discard it) and as a side effect they connect to a multicast
channel that is specified in the message.
A (multicast) assembly message can only be sent when the whole 
recruited team is ready to receive (otherwise the message is blocked)
and as a side effect the team proceeds to the next production stage. 

Clearly, the dynamicity of our formalism stems from the fact that we base
interactions directly over the evolving states of the different agents 
rather than over static notions of network connectivity as of existing approaches. 

\section{Transition systems}\label{sec:bck}



A \emph{Doubly-Labeled Transition System} (DLTS) is
$\mathcal{T}=\langle \Sigma,\Upsilon,S, \allowbreak S_0,R,L\rangle$,
where $\Sigma$ is a \emph{state alphabet}, $\Upsilon$ is a
\emph{transition alphabet}, $S$ is a set of states, $S_0\subseteq S$
is a set of initial states, $R\subseteq S\times \Upsilon \times S$ is
a transition relation, and $L:S\rightarrow \Sigma$ is a labelling
function.

A \emph{path} of a transition system $\mathcal{T}$ is a maximal sequence of states and transition labels $\sigma=s_0,a_0,s_1,a_1,\ldots$ such that $s_0\in S_0$ and 
for every $i\geq 0$ we have $(s_i,a_i,s_{i+1})\in R$. 
We assume that for every state $s\in S$ there are $a\in \Upsilon$ and
$s'\in S$ such that $(s,a,s')\in R$. 
Thus, a sequence $\sigma$ is maximal if it is infinite.
If $|\Upsilon|=1$ then $\mathcal{T}$ is a \emph{state-labeled system} and if $|\Sigma|=1$ then $\mathcal{T}$ is a \emph{transition-labeled system}. 

We introduce \emph{Discrete Systems} (DS) that represent state-labeled systems symbolically.
A DS is $\mathcal{D} = \langle \mathcal{V}, \theta, \rho \rangle$, where the components of
$\mathcal{D}$ are as follows:
\begin{itemize}[label={$\bullet$}, topsep=0pt, itemsep=0pt, leftmargin=10pt]
\item
  $\mathcal{V} = \set{v_1,...,v_n}$: A finite set of typed variables.  
  Variables range over discrete domains, e.g., Boolean or Integer.
A \emph{state} $s$ is an interpretation of  $\mathcal{V}$,
  i.e., if $D_v$ is the domain of $v$, then $s$ is in
  $\prod_{v_i\in \mathcal{V}} D_{v_i}$.

\item
  $\theta$ : The \emph{initial condition}. This is an
  assertion over $\mathcal{V}$ characterising all the initial states of the
  DS.  
  A state is called \emph{initial} if it satisfies $\theta$.
\item
  $\rho$ : A \emph{transition relation}. 
  This is an assertion $\rho(\mathcal{V}\cup\mathcal{V}')$, where $\mathcal{V}'$ is a primed
  copy of variables in $\mathcal{V}$. The transition relation $\rho$
  relates a state $s\in\Sigma$ to its \emph{$\mathcal{D}$-successors}
  $s'\in\Sigma$, i.e., 
  $(s,s')\models \rho$, where $s$ is an interpretation to
  variables in $\mathcal{V}$ and $s'$ is for variables in $\mathcal{V}'$.
\end{itemize}

The DS ${\mathcal{D}}$ gives rise to a state-labeled transition system $\mathcal{T}_{\mathcal{D}}=\langle\Sigma,\{1\},T,T_0,R\rangle$, where $\Sigma$ and $T$ are the
set of states of $\mathcal{T}_\mathcal{D}$, $T_0$ is the set of
initial states, and $R$ is the set of triplets $(s,1,s')$ such that
$(s,s')\models \rho$. Clearly, the paths of ${\mathcal{T}}_\mathcal{D}$ are exactly the
 paths of $\mathcal{D}$, but the size of $\mathcal{T}_\mathcal{D}$ is 
exponentially larger than the description of $\mathcal{D}$.

A common way to translate a DLTS into a DS, which we adapt
and extend below, would be to include 
additional variables that encode the transition alphabet.
Given such a set of variables $\mathcal{V}_\Upsilon$, an assertion
$\rho(\mathcal{V} \cup \mathcal{V}_\Upsilon \cup \mathcal{V}')$ characterises
the triplets $(s,\upsilon,s')$ such that $(s,\upsilon,s') \models
\rho$, where $s$ supplies the interpretation to $\mathcal{V}$, $\upsilon$
to $\mathcal{V}_\Upsilon$ and $s'$ to $\mathcal{V}'$.

\section{\rcp: Reconfigurable Communicating Programs}\label{sec:model}

We formally present the \rcp communication
formalism and its main ingredients. We start by specifying agents (or programs)
and their local behaviours and we show how to compose these local
behaviours to generate a global (or a system) one.
We assume that agents rely on a set of common variables $\scv$, a set
of data variables $\sdat$, and a set of  channels $\schan$ containing
the broadcast one $\toall$.

\begin{definition}[Agent]\label{def:comp}
An agent is $
A_{\id}=\langle V_{\id}\coma
f_{\id}\coma \pfunc{g}{s}{\id}\coma
\pfunc{g}{r}{\id},$ $\pfunc{\trans}{s}{\id}\coma \pfunc{\trans}{r}{\id},\theta_{\id}\rangle
$, where:
\begin{itemize}[label={$\bullet$}, topsep=0pt, itemsep=0pt, leftmargin=10pt]
\item $V_{\id}$ is a finite set of typed local variables, each
  ranging over a finite domain. A state $\cstate{\id}{}$ is an
  interpretation of $V_{\id}$, i.e., if $\mathsf{Dom}(v)$ is the
  domain of $v$, then $\cstate{\id}{}$ is an element in
  $\compdom{\id}$. We use $V'$ to denote the primed copy of $V$ and  
$\mathsf{Id}_{\id}$ to denote the assertion $\bigwedge_{v\in
  V_{\id}}v=v'$.

\item $\func{f_{\id}}{\scv}{V_{\id}}$ is a function, associating common variables to local variables.
We freely use the notation $f_{\id}$ for the assertion $\bigwedge_{\cv\in \scv}\cv=f_{\id}(\cv)$.

\item $\pfunc{g}{s}{\id}(V_{\id}, \schan, \sdat, \scv)$ is a send guard
  specifying a condition on receivers. That is, the predicate, obtained
  from $\pfunc{g}{s}{\id}$ after assigning $\cstate{\id}{}$, $\chan$, and $\datfun$ (an
  assignment to $\sdat$)
  , which is checked against every receiver $j$ after
  applying $f_{j}$.

\item $\pfunc{g}{r}{\id}(V_{\id}, \schan)$ is a receive guard describing
  the connection of an agent to channel $\chan$. We let
  $\pfunc{g}{r}{\id}(V_{\id}, \toall)$ $ = \true$, i.e., every agent
  is always connected to the broadcast channel. We note, however, that
  receiving a broadcast message could have no effect on an agent. 
\item $\pfunc{\trans}{s}{\id}(V_{\id}, V'_{\id}, \sdat, \schan)$ is an assertion
describing the send transition relation while $\pfunc{\trans}{r}{\id}(V_{\id}, V'_{\id}, \sdat, \schan)$ is an assertion
describing the receive transition relation.
We assume that an agent is broadcast input-enabled, i.e.,
$\forall v, \datfun\ \exists v'\ \such$ $\pfunc{\trans}{r}{\id}(v, v',
\datfun, \toall)$.
 
\item $\theta_{\id}$ is an assertion on $V_{\id}$ describing the
  initial states, i.e., a state is initial if it satisfies
  $\theta_{\id}$.  
\end{itemize}
\end{definition}

Agents exchange messages. A message is defined
by the channel it is sent on, the data it carries, the sender identity,
and the assertion describing the possible local assignments to
common variables of receivers.
Formally:

\begin{definition}[Observation]\label{def:obsrv}
An observation is a tuple
$m=\tuple{\chan,\datfun,\id,\pred}$, where $\chan$ is a channel, $\datfun$ is
an assignment to $\sdat$, $\id$ is an identity,
and $\pred$ is a predicate over \scv.
\end{definition}
In Definition~\ref{def:obsrv} we interpret $\pred$ as a set of possible assignments to common
variables $\scv$.
In practice, $\pred$ is obtained from
$\pfunc{g}{s}{\id}(\cstate{\id}{},\chan,\datfun,\scv)$ for an agent
$\id$, where $\cstate{\id}{}\in\compdom{\id}$ and $\chan$ and $\datfun$
are the channel and assignment in the observation. 
We freely use $\pi$ to denote either a predicate over $\scv$ or its
interpretation, i.e., the set of variable assignments $c$ such that $c
\models \pi$.

A set of agents agreeing on the common variables $\scv$,
data variables $\sdat$, and channels $\schan$ define a \emph{system}.
We define a DLTS capturing the
interaction and 
then give a DS-like symbolic representation of the same system.

Let $\Upsilon$ be the set of possible observations.
That is, let $\schan$ be the set of channels, $\mathcal{D}$ the
product of the domains of variables in $\sdat$, $\bcal{A}$ the set of
agent identities, and $\Pi(\scv)$ the set of predicates over $\scv$
then
$\Upsilon \subseteq \schan \times \mathcal{D}\times \bcal{A}\times
\Pi(\scv)$.
In practice, we restrict attention to predicates in $\Pi(\scv)$ that
are obtained from $\pfunc{g}{s}{\id}(V_{\id},\schan,\sdat, \scv)$ by assigning
to $\id$ and $V_{\id}$ the identity and a state of some agent.

Let $S_{\id}$=$\Pi_{v\in V_{\id}}\mathsf{Dom}(v)$ and
$S=\Pi_{\id}S_{\id}$.
Given an assignment $s\in S$ we denote by $s_{\id}$ the projection of 
$s$ on $S_{\id}$.

\begin{definition}[Transition System]\label{def:ts}
  Given a set $\{A_{\id}\}_{\id}$ of agents,
  we define a doubly-labeled transition system $\mathcal{T}=\langle
  \Sigma,\Upsilon,S,S_0,R,L\rangle$, where $\Upsilon$ and $S$ are as
  defined above, $\Sigma=S$, $S_0$ are the states that satisfy
  $\bigwedge_{\id}\theta_{\id}$, $L:S\rightarrow \Sigma$ is the
  identity function, and $R$ is as follows.

  A triplet $(s,\upsilon,s') \in R$, where
  $\upsilon=(\chan,\datfun,\id,\pred)$, if the following conditions hold:
  \begin{itemize}[label={$\bullet$}, topsep=0pt, itemsep=0pt, leftmargin=10pt]
  \item
    For the sender $\id$ we have that
    $\pred=\pfunc{g}{s}{\id}(s_{\id},\chan, \datfun)$, i.e.,
    $\pred$ is obtained from $\pfunc{g}{s}{\id}$ by assigning the
    state of $\id$, the data variables assignment $\datfun$ and the channel $\chan$, and
    $\pfunc{\trans}{s}{\id}(s_{\id},s'_{\id},\datfun,\chan)$ evaluates to
    $\true$. 
  \item
    For every other agent $\id'$ we have that either
    (a)
    $\pred(f^{-1}_{\id'}(s_{\id'}))$, 
    $\pfunc{\trans}{r}{\id'}(s_{\id'},s'_{\id'},\datfun,\chan)$,
    and $\pfunc{g}{r}{\id'}(s_{\id'},\chan)$ all evaluate to $\true$,
    (b)
    $\pfunc{g}{r}{\id'}(s_{\id'},\chan)$ evaluates to $\false$ and
    $s_{\id'}=s'_{\id'}$,
    or
    (c) $\chan=\star$, $\pred(f^{-1}_{\id'}(s_{\id'}))$ evaluates to
    $\false$ and $s_{\id'}=s'_{\id'}$.
    By $\pred(f^{-1}_{\id'}(s_{\id}))$ we denote the assignment of $v\in\scv$ by
     $s_{\id}(f_{\id'}(v))$ in $\pred$. 
  \end{itemize}
\end{definition}
An observation $(\chan,\datfun,\id,\pred)$ labels a
transition from $s$ to $s'$ if the sender $\id$ determines the
predicate (by assigning $s_{\id}$, $\datfun$, and $\chan$ in $\pfunc{g}{s}{\id}$)
and the send transition of $\id$ is satisfied by assigning $s_{\id}$,
$s'_{\id}$ and $\datfun$ to it, i.e., the sender changes the state from
$s_\id$ to $s'_{\id}$ and sets the data variables in the observation
to $\datfun$. All the other agents either (a) satisfy this
condition on receivers (when translated to their local copies of the
common variables), are connected to $\chan$ (according to
$\pfunc{g}{r}{\id'}$), and perform a valid transition when reading the
data sent in $\datfun$, (b) are not connected to $\chan$
(according to $\pfunc{g}{r}{\id'}$) and all their variables do not
change, or (c) the channel is a broadcast channel, the agent
does not satisfy the condition on receivers, and all their variables
do not change.

We now define a symbolic version of the same transition
system. To do that we have to extend the format of the
allowed transitions from assertions over an extended set of variables
to assertions that allow quantification.

\begin{definition}[Discrete System]\label{def:sys}
Given a set $\{A_{\id}\}_{\id}$ of agents, a system is defined as follows: $
S=\conf{\sysvar\coma\rho\coma\theta}
$, where $\sysvar=\bpcup{\id}{V_{\id}}$ and $\theta=\band{\id}{}{\theta_{\id}}$ and a state of the system is in $\sdom$.  
The transition relation of the system is characterised as follows:
\[\small
\begin{array}{l}
\rho:\ \exists \chan\ \exists
\sdat\ \bor{k}{}{\pfunc{\trans}{s}{k}(V_k, V'_k, \sdat, \chan)} \wedge\band{j\neq k}{}{\exists \scv. f_j\wedge }\\
\hfill\tuple{\begin{array}{l r}
\multicolumn{2}{l}{\pfunc{g}{r}{j}(V_j, \chan)\wedge \pfunc{\trans}{r}{j}(V_j, V'_j, \sdat, \chan) \wedge\ \pfunc{g}{s}{k}(V_k, \chan, \sdat, \scv)}
\\[4pt]
\vee & 
\neg\pfunc{g}{r}{j}(V_j, \chan)\wedge \mathsf{Id}_j
\\[4pt]
\vee & 
\chan=\toall\wedge \neg \pfunc{g}{s}{k}(V_k, \chan, \sdat, \scv)\wedge \mathsf{Id}_j

\end{array}}
\end{array}
\]
\end{definition}

The transition relation $\rho$ relates a system state
$\sstate{}$ to its successors $s'$ given an
observation $m=\tuple{\chan,\datfun,k,\pred}$. 
Namely, there exists an agent $k$ that sends a message with data $\datfun$
(an assignment to $\sdat$) with assertion $\pred$ (an assignment to
$\pfunc{g}{s}{k}$) on channel $\chan$ and all other agents are either (a)
connected, satisfy the send predicate, and participate in the
interaction, (b) not connected and idle, or (c) do not satisfy the
send predicate of a broadcast and idle.
That is, the agents satisfying $\pred$ (translated to their local state
by the conjunct $\exists \scv.f_j$) and connected to channel $\chan$ 
(i.e., $\pfunc{g}{r}{j}(\cstate{j}{}, \chan)$) get the
message and perform a receive transition. As a result of interaction,
the state variables of the sender and these receivers might be
updated.
The agents that are \emph{not connected} to the channel (i.e.,
$\neg\pfunc{g}{r}{j}(\cstate{j}{}, \chan)$) do not
participate in the interaction and stay still.
In case of broadcast, namely when sending
on $\toall$, agents are always connected and the set of receivers 
not satisfying $\pred$ (translated again as above) stay still.
Thus, a blocking multicast arises when a sender is blocked until all
\emph{connected} agents satisfy $\pred\wedge f_j$.
The relation ensures that, when sending on
a channel that is different from the broadcast channel $\toall$, the set
of receivers is the full set of \emph{connected} agents.
On the broadcast channel agents who do not satisfy the send
predicate do not block the sender.

The translation above to a transition system leads to a natural
definition of a trace, where the information about channels, data,
senders, and predicates is lost. We extend this definition to
include this information as follows:

\begin{definition}[System trace]\label{def:systrace}
A system trace is an infinite sequence 
$\sstate{0}\rTo{m_0}\sstate{1}$ $\rTo{m_1}\dots$ of system states and
observations such that $\forall t\geq 0$:
$m_t=\tuple{\chan_t,\datfun_t,k,\pred_t}$,
$\pred_t=\pfunc{g}{s}{k}(\cstate{k}{t},\datfun_t, \chan_t)$, and: 
\[\small
\begin{array}{l c l l}
    \tuple{\sstate{t}
  \coma\sstate{t+1}}  \models & \hspace{-1cm}
{\pfunc{\trans}{s}{k}(\cstate{k}{t}, \cstate{k}{t+1},
\datfun_t, \chan_t)}  \wedge \band{j\neq k}{}{\exists \scv. f_j\wedge }\\
& \hspace{-.5cm}
\tuple{\begin{array}{l r}
\multicolumn{2}{l}{\pfunc{g}{r}{j}(\cstate{j}{t},\chan_t)\wedge \pfunc{\trans}{r}{j}(\cstate{j}{t}, \cstate{j}{t+1}, \datfun_t, \chan_t)\wedge\pred_t}\\[4pt]
\vee &

\neg\pfunc{g}{r}{j}(\cstate{j}{t},\chan_t)\wedge \cstate{j}{t}=\cstate{j}{t+1}\\[4pt]
\vee  &

\chan_t=\toall\wedge \neg \pred_t\wedge \cstate{j}{t}=\cstate{j}{t+1}

\end{array}}
\end{array}
\]

\end{definition}

That is, we use the information in the observation to localize the
sender $k$ and to specify the channel, data values, and the
send predicate.

The following lemma relates the traces arising from Definition~\ref{def:systrace} to that of Definition~\ref{def:ts}.
\begin{lemma}{\label{lem:dlts=ds}} The traces of a system 
composed of a set of agents $\{A_{\id}\}_{\id}$ 
  are the paths of the induced DLTS.
\end{lemma}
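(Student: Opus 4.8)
The plan is to prove the statement as a step-wise biconditional: a sequence $\sstate{0} \rTo{m_0} \sstate{1} \rTo{m_1} \cdots$ satisfies Definition~\ref{def:systrace} if and only if it is a path of the DLTS of Definition~\ref{def:ts}. Both are infinite sequences alternating system states and observations, so the argument reduces to two checks, that they admit the same initial states and that, for a fixed observation $m = \tuple{\chan, \datfun, k, \pred}$, the one-step relation between $\sstate{t}$ and $\sstate{t+1}$ coincides with membership $(\sstate{t}, m, \sstate{t+1}) \in R$. The initial-state check is immediate: a path starts in $S_0$, which by Definition~\ref{def:ts} is exactly the set of states satisfying $\band{\id}{}{\theta_{\id}}$, and this coincides with the initial condition $\theta$ of the system that a trace must satisfy at $\sstate{0}$.

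The bulk of the work is matching the one-step conditions conjunct by conjunct, after identifying the sender as $\id = k$. The sender clauses agree verbatim: in both definitions the observed predicate is forced to be $\pred = \pfunc{g}{s}{k}(\cstate{k}{t}, \chan, \datfun)$ and the sender's move is constrained by $\pfunc{\trans}{s}{k}(\cstate{k}{t}, \cstate{k}{t+1}, \datfun, \chan)$. It then remains to compare, for each non-sender $j \neq k$, the existentially quantified conjunct $\exists \scv.\, f_j \wedge (D_1 \vee D_2 \vee D_3)$ of Definition~\ref{def:systrace} with the three alternatives (a), (b), (c) of Definition~\ref{def:ts}, in which the predicate is instead applied directly through the substitution $\pred(f^{-1}_j(\cstate{j}{t}))$.

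The crux, and the only genuine obstacle, is justifying that the quantified form and the substituted form describe the same pairs $(\cstate{j}{t}, \cstate{j}{t+1})$. This rests on $f_j$ being a \emph{total function} $\func{f_j}{\scv}{V_j}$: the assertion $f_j \equiv \band{\cv \in \scv}{}{\cv = f_j(\cv)}$ is satisfiable by exactly one assignment to $\scv$ once the local state $\cstate{j}{t}$ is fixed, namely $\cv \mapsto \cstate{j}{t}(f_j(\cv))$. I would first distribute $\exists \scv$ over the disjunction, and then observe that for any formula $\phi$, $\exists \scv.\, (f_j \wedge \phi)$ is equivalent to evaluating $\phi$ at this unique witness, i.e.\ to $\phi(f^{-1}_j(\cstate{j}{t}))$; uniqueness is essential here, since it lets the existential commute even with the negated occurrence $\neg\pred_t$. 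Applying this to each disjunct reproduces the three cases exactly: $D_1$, whose only $\scv$-dependent subformula is $\pred_t$, becomes (a); $D_2 = \neg\pfunc{g}{r}{j} \wedge \mathsf{Id}_j$ contains no common variable, so the quantifier drops and it becomes (b); and $D_3 = \chan = \toall \wedge \neg\pred_t \wedge \mathsf{Id}_j$ becomes (c).

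Finally I would reassemble the pieces. Since in Definition~\ref{def:systrace} the quantifier $\exists \scv$ and the matching $f_j$ are taken independently inside each conjunct of $\band{j \neq k}{}{}$, the per-agent equivalences combine to show that the whole product matches the ``for every other agent'' clause of Definition~\ref{def:ts}; together with the agreeing sender clause this yields $(\sstate{t}, m, \sstate{t+1}) \in R$ iff the step satisfies Definition~\ref{def:systrace}. As both sequences are infinite --- maximality of paths being guaranteed by the standing assumption that every state of $\mathcal{T}$ has a successor --- the step-wise equivalence lifts to full sequences, establishing that the traces of the system are exactly the paths of the induced DLTS.
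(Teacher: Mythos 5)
The paper states this lemma without proof --- it is offered as an immediate consequence of the two definitions --- so there is no argument of record to compare yours against; your proposal supplies exactly the verification the authors left implicit, and it is correct. You correctly reduce the claim to a step-wise equivalence and isolate the one non-trivial point: in Definition~\ref{def:ts} the sender's predicate is evaluated at a non-sender $j$ by the substitution $\pred(f^{-1}_{j}(\cdot))$, whereas in Definition~\ref{def:systrace} it sits under $\exists \scv.\, f_j \wedge \cdot$, and these coincide because the assertion $f_j \equiv \bigwedge_{\cv\in\scv}\cv = f_j(\cv)$ admits exactly one satisfying assignment to $\scv$ once the local state $\cstate{j}{t}$ is fixed. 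Your remark that \emph{uniqueness} (not mere existence) of this witness is what lets the equivalence pass through the negated occurrence $\neg\pred_t$ in the broadcast disjunct is precisely the point a sloppier argument would miss. The remaining bookkeeping --- matching the sender clauses, distributing the existential over the three disjuncts to recover cases (a), (b), (c), and reading the initial condition $\theta=\bigwedge_{\id}\theta_{\id}$ into the notion of system trace so that initial states agree --- is handled correctly.
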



\section{collaborative robots}\label{sec:exp}
We complete the details of the RMS example informally described in
Section~\ref{sec:overview}.
Many aspects of the example are kept very simple on purpose due to
lack of space. 
 
The system, in our scenario, consists of an assembly product line agent 
(\val{line}) and several types of task-driven robots. We only give a
program for type-$1$ (\taval) because  type-$2$
(\tbval) and type-$3$ (\tcval) are similar.
A product line is responsible for assembling the main parts and delivering 
the final product.
Different types of robots are responsible for sub-tasks, e.g., retrieving and/or
assembling individual parts.
The product line is generic and can be used to produce different products 
and thus it has to determine the set of resources, to recruit a team of robots, 
to split tasks and to coordinate the final stage of production. 

Every agent has copies of the common variables: $\typecvar$ indicating its type
($\val{line}$, $\taval$, $\tbval$, $\tcval$),
$\assigncvar$ indicating whether a robot is assigned, and $\readycvar$
indicating what stage of production the robot is in.
The set of channels includes the broadcast channel $\toall$ and
multicast channels $\{{A},\ldots \}$.
For simplicity, we only use the multicast channel ${A}$ and fix it to the line agent.
The set of data variables is $\{\msgdvar, \nodvar, \lnkdvar\}$
indicating the type of the message, a number (of robots per type),
and a name of a channel.
When a data variable is not important for some message it is omitted
from the description. We also use the notation $\keep(v)$ to denote that a variable is not
changed by a transition.

In addition to copies of common variables (e.g., $f_l(\typecvar)$ $=\typelvar$), the  line agent has the following state variables: $\stlvar$ is a state ranging over $\{\pendval,
\startval\}$ (pending and  start), $\lnklvar$ is the link of the product line,
$\prdlvar$ is the id of the active product, and $\stagelvar$ is used
to follow stages of the production.
The initial condition ${\theta_l}$ of a line agent is:
\smallskip

$
\small
{\theta_{l}} : \stlvar=\pendval \wedge \stagelvar=\val{0}  \wedge \lnklvar=A \wedge (\prdlvar=\val{1} \vee \prdlvar=\val{2})
$\smallskip

Thus, the line agent starts with a task of assembling one of two products
and uses channel $A$.
If there are multiple lines, then each is initialised
with a different channel. 

The send guard of $\lineagent$ is of the form:
\[\small
\begin{array}{r c l}
  \pfunc{g}{s}{l} \ : \ \chan{=}\star\wedge \neg\assigncvar \wedge \hfill
 ( \prdlvar{=}\val{1}{\rightarrow}(\typecvar{=}\taval\vee \typecvar{=}\tbval))
    \wedge\\ 
    \multicolumn{3}{r}
        {(\prdlvar{=}\val{2}{\rightarrow}(\typecvar{=}\taval \vee \typecvar{=}\tcval))
        \vee
  \chan{=}\msf{lnk}\wedge \readycvar=\stagelvar}
\end{array}
\]

Namely, broadcasts are sent to robots whose $\assigncvar$ is false (i.e.,
free to join a team). If the identity of the product to be
assembled is 1, then the required agents are $\taagent$ and
$\tbagent$ and if the identity of the product is 2, then the required agents
are $\taagent$ and $\tcagent$. 
Messages on channel $A$ (the value of $\lnklvar$) are sent to connected agents
when they reach a matching stage of production, i.e., $ \readycvar=\stagelvar$.
The receive guard of $\lineagent$ is $\chan=\star$, i.e., it is only connected to channel $\star$.

The send transition relation of $\lineagent$ is of the form:
\[\small
\begin{array}{rl} 
  {\pfunc{\trans}{s}{l}}:&
  \keep(\lnklvar,\prdlvar,\typelvar,\assignlvar,\readylvar) \wedge \\
  \multicolumn{2}{r}{
    \left (
    \begin{array}{l l}
&  \stlvar=\pendval \wedge 
  \datfun(\msgdvar\mapsto \teamval;\nodvar\mapsto \val{2};\lnkdvar\mapsto\lnklvar)\\
\multicolumn{2}{r}{\wedge~ \stagelvar'=\val{1} \wedge
    \stlvar'=\startval \wedge \chan=\toall}\\[2pt]
\vee& \stlvar=\startval \wedge
  \datfun(\msgdvar\mapsto\assembleval) \wedge  \stagelvar=\val{1}\wedge\\
\multicolumn{2}{r}{\wedge~ \stlvar'=\startval \wedge \stagelvar'=\val{2}\wedge \chan=\lnklvar} \\[2pt]
\vee& \stlvar=\startval \wedge
    \datfun(\msgdvar\mapsto\assembleval) \wedge \stlvar'=\pendval \\
    \multicolumn{2}{r}{\wedge~ \stagelvar=\val{2}\wedge \stagelvar'=\val{0}\wedge \chan=\lnklvar}
    \end{array} \right ) }
\end{array}
\]
$\lineagent$ starts in the pending state (see $\theta_l$).
It broadcasts a request ($\datfun(\msgdvar\mapsto\teamval)$) for two
robots ($\datfun(\nodvar\mapsto\val{2})$) per required type asking them to join
the team on the multicast channel stored in its $\lnklvar$ variable
($\datfun(\lnkdvar\mapsto\lnklvar)$).
According to the send guard, if the identity of the product to assemble is 
1 ($\prdlvar=\val{1}$) the broadcast goes to type 1 and type 2 robots and if
the identity is 2 then it goes to type 1 and type 3 robots.
Thanks to channel mobility (i.e., $\msf{\datfun(\rulename{lnk})=lnk}$) a
team on a dedicated link can be formed incrementally at run-time.
In the start state, $\lineagent$ attempts an \rulename{assemble}
(blocking) multicast on $A$.
The multicast can be sent only when the entire team completed the work
on the production stage (when their common
variable $\readycvar$ agrees with $\stagelvar$).
One multicast increases the value of $\stagelvar$ and keeps
$\lineagent$ in the start state.
A second multicast finalises the production and $\lineagent$
becomes free again.

We set 
$\small
  \pfunc{\trans}{r}{l}\hspace{-1mm}:\keep(\msf{all})$
  as $\lineagent$'s recieve transition relation.
That is, $\lineagent$ is not influenced by incoming messages.

We now specify the behaviour of $\msf{\rulename{t1}}$-robots and show how an autonomous and 
incremental one-by-one team formation is done anonymously at run-time.
In addition to copies of common variables a $\msf{\rulename{t1}}$-robot has the following variables:
 $\stlvar$ ranges over $\set{\msf{\pendval, \startval, \val{end}}}$,
$\stagebvar$ is used to control the progress of individual behaviour,
$\msf{no}$ (resp. $\msf{lnk}$) is a placeholder to a number (resp. link) 
learned at run-time, and $f_b$ relabels common variables as follows: $f_b(\typecvar)=\typebvar$, $f_b(\assigncvar)=\assignbvar$ and $f_b(\readycvar)=\readybvar$.

Initially, a robot is available for recruitment:

$\small
\begin{array}{rcl}
{\theta_{b}} : \msf{(st=\pendval)\wedge(btype=\taval)\wedge \neg basgn\wedge(lnk=\bot)\wedge}\\
(\stagebvar=\readybvar=\nolvar=\val{0})
\end{array}
$

The send guard specifies that a robot only broadcasts to unassigned robots of the
same type, namely:

$
\small
\begin{array}{rcl}
\pfunc{g}{s}{b} : (\chan=\star)\wedge
  (\typecvar=\typebvar)\wedge \neg \assigncvar.
\end{array}
$

The receive guard specifies that a $\msf{\rulename{t1}}$-robot is connected  either to
a broadcast $\star$ or to a channel matching the value of its link variable:
$
\small
\begin{array}{rcl}
\pfunc{g}{r}{b} : \chan=\star \vee \chan=\msf{lnk}.
\end{array}
$

The send $\small\pfunc{\trans}{s}{b}$ and receive $\small\pfunc{\trans}{r}{b}$ transition relations are:
\[\small
\begin{array}{rl} 
  {\pfunc{\trans}{s}{b}}:&
  \keep(\lnklvar,\typebvar) \wedge \\
  \multicolumn{2}{r}{
    \tuple{
    \begin{array}{l l}
& 
\stlvar=\startval \wedge \datfun(\msgdvar\mapsto \formval;\lnkdvar\mapsto\lnklvar;\nodvar\mapsto\nolvar-\val{1)}\\
\multicolumn{2}{r}{\wedge~ (\nolvar\geq \val{1)}\wedge\stagebvar=\val{0}\wedge\stagebvar'=\val{1}\wedge\stlvar'=\enval
    }\\
  \multicolumn{2}{r}{\wedge~ \assignbvar'\wedge (\nolvar'=\val{0})\wedge
 \chan=\toall
  }\\[2pt]
  
\vee& \stlvar=\enval \wedge
    \stagebvar=\val{1}\wedge\stagebvar'=\val{2}\wedge\ \dots\\
&\vdots\qquad[\msf{\text{\scriptsize\rulename{individual behavior}}}]\\[2pt]
\vee& \stlvar=\enval \wedge
    \dots\wedge\stagebvar'=\val{n}\wedge\readybvar'=\val{1}

    \end{array} }}
\end{array}
\]

\[\small
\begin{array}{rl} 
  {\pfunc{\trans}{r}{b}}:&
  \keep(\typebvar) \wedge \\
  \multicolumn{2}{r}{
    \tuple{
    \begin{array}{l l}
& 
\stlvar=\pendval \wedge \datfun(\msgdvar\mapsto \teamval)\wedge\stlvar'=\startval\\
\multicolumn{2}{r}{\wedge~ \lnklvar'=\datfun(\lnkdvar)\wedge\nolvar'=\datfun(\nodvar)\wedge
 \chan=\toall
    }\\[2pt]
  
\vee& \stlvar=\stlvar'=\startval \wedge \datfun(\msgdvar\mapsto \formval)\wedge \datfun(\nodvar)>\val{0}\wedge\chan=\toall\\
\multicolumn{2}{r}{\wedge~\neg\assignbvar'\wedge\lnklvar'=\datfun(\lnkdvar)\wedge\nolvar'=\datfun(\nodvar)
 
    }\\[2pt]
\vee& \stlvar=\startval \wedge \datfun(\msgdvar\mapsto \formval;\nodvar\mapsto\val{0})\wedge
 \chan=\toall\\
\multicolumn{2}{r}{\wedge~ \neg \assignbvar' \wedge \stlvar'=\pendval\wedge\lnklvar'=\bot\wedge\nolvar'=\val{0}
    }\\[2pt]
    
    \vee& \stlvar=\enval \wedge \datfun(\msgdvar\mapsto \assembleval)\wedge
 \readybvar=\val{1}\wedge\chan=\lnklvar\\
\multicolumn{2}{r}{\wedge~ \assignbvar' \wedge \stagebvar=\val{n}\wedge\stlvar'=\enval\wedge\readybvar'=\val{2}\wedge\stagebvar'=\val{0}
    }\\[2pt]
    \vee& \stlvar=\enval \wedge \datfun(\msgdvar\mapsto \assembleval)\wedge
 \readybvar=\val{2}\wedge\chan=\lnklvar\\
\multicolumn{2}{r}{\wedge~ \stlvar'=\pendval\wedge\readybvar'=\val{0}
    \wedge \lnklvar'=\bot\wedge\neg\assignbvar'
    }
    \end{array} }}
\end{array}
\]

The team formation starts when unassigned robots are in pending states ($\pendval$) as specified by the initial condition $\theta_{b}$. From this state they may only receive a team message from a line agent (by the \nth{1} disjunct of $\pfunc{\trans}{r}{b}$). The message contains the number of required robots
$\datfun(\nodvar)$ and a team link $\datfun(\lnkdvar)$.
The robots copy these values to their local variables
(i.e., $\lnklvar'=\datfun(\lnkdvar)$ etc.) and 
move to the start state ($\stlvar'=\startval$) where they either join the team (by $\pfunc{\trans}{s}{b}$) or
step back (by $\pfunc{\trans}{r}{b}$) as follows:
\begin{itemize}[label={$\bullet$}, topsep=0pt, itemsep=0pt, leftmargin=10pt]
\item
  By the \nth{1} disjunct of $\pfunc{\trans}{s}{b}$ a robot 
  joins the team by \emph{broadcasting} a $\formval$ 
  message to $\msf{\rulename{t1}}$-robots forwarding the
  number of still required robots
  ($\datfun(\rulename{no})=(\nolvar-\val{1})$) and the team link
  ($\msf{\datfun(\rulename{lnk})=lnk}$).
  This message is sent only if $\msf{no\geq\val{1}}$, i.e, at least one 
  robot is needed.
  The robot then moves to state
  $(\enval)$ to start its mission.
\item
  By the \nth{2} disjunct of $\pfunc{\trans}{r}{b}$ a robot \emph{receives} a $\formval$ message from a 
  robot, updating the number of still required robots (i.e., if 
  ${\datfun(\nodvar)>\val{0}}$), and remains in the start state.
\item
  By the \nth{3} disjunct of $\pfunc{\trans}{r}{b}$ a robot \emph{receives} a $\formval$ message from a robot,
  informing that no more robots are needed, i.e.,
  ${\datfun(\nodvar)=\val{0}}$. The robot then moves to the pending state and disconnects from the team link, i.e.,
  $\lnklvar'=\val{$\bot$}$. Thus it may not block interaction on the
  team link. 
\end{itemize}
The
last disjuncts of $\pfunc{\trans}{s}{b}$ specify that a team robot (i.e., with
$\stagebvar=\val{1}$) starts its mission
independently until it finishes ($\stagebvar'=\val{n}\wedge\readybvar'=1$) . When all team robots finish they enable an $\assembleval$ message on $\msf{A}$, to start the next stage of production as specified by the \nth{4} disjunct of $\pfunc{\trans}{r}{b}$.
When the robots are at the final stage (i.e.,
$\readybvar'=\val{2}$) they enable another $\assembleval$ message to
finalise the product and subsequently they reset to their initial conditions.

\section{\ltal: An extension of \ltl  }\label{sec:logic}
We introduce \ltal, an extension of \ltl with the ability to refer and therefore 
	reason about agents interactions.
  We replace the next operator of \ltl with the observation descriptors:
  \emph{possible} $\nxt{O}$ and \emph{necessary} $\alws{O}$, to refer to messages and the intended set of receivers.
  The syntax of formulas $\phi$ and \emph{observation descriptors} $O$ is as follows:
\smallskip

\noindent
$
\begin{array}{@{}l@{\ }r@{\,\,}c@{\,\,}l@{}}
&
\phi &::=& v \mid
  	\neg v \mid
  	\phi \lor \phi \mid
  	\phi \wedge \phi \mid
  	\phi \Until \phi \mid  
  	\phi \Release \phi \mid
  	\nxt{O} \phi \mid
  	\alws{O} \phi\\[2pt]
&
O & ::= &  
\cv \mid \neg \cv \mid \chan \mid \neg \chan \mid  k \mid \neg k \mid 
  	\dat \mid  \neg \dat\mid \exis{O} \mid \all{O}\\[1pt]
	&&& \mid O \lor O 
	  \mid O \wedge O
\end{array}
$
\smallskip

  We use the classic abbreviations $\Impl,\Iff$ and 
  the usual definitions for $\true$ and $\false$. We also introduce the temporal abbreviations $\f\phi\equiv
  \true\Until\phi$ (\emph{eventually}), $\g\phi\equiv\neg\f\neg\phi$
  (globally) and $\varphi\WaitFor\psi\equiv\psi\Release (\psi\vee\varphi)$ (\emph{weak until}). Furthermore we assume that all variables are Boolean because 
  every finite domain can be encoded by multiple Boolean
  variables. For convenience we will, however, use non-Boolean variables when relating to our RMS example. 
  
  The syntax of \ltal is presented in \emph{positive normal form} to facilitate translation into alternating B\"uchi automata (ABW) as shown later. We, therefore, use $\overline{\Theta}$ to denote the dual of formula 
  $\Theta$ where $\Theta$ ranges over either $\phi$ or $O$.
  Intuitively, $\overline{\Theta}$ is obtained from $\Theta$ by switching 
  $\vee$ and $\wedge$ and by applying dual to sub formulas, e.g.,
  $\overline{\phi_1 \Until \phi_2} = \overline{\phi_1} \Release 
  \overline{\phi_2}$, $\overline{\phi_1 \wedge \phi_2} = \overline{\phi_1} \vee 
  \overline{\phi_2},\ $ $\overline{\cv} = \neg \cv$, and $\overline{\exis{O}} = 
  \all{\overline{O}}$.

	Observation descriptors are built from referring to the different parts 
	of the observations and their Boolean combinations. Thus, they refer to the channel in $\schan$, the data 
	variables in $\sdat$, the sender $k$, and the predicate over common variables 
	in $\scv$.
	These predicates are interpreted as sets of possible assignments to common variables, and therefore
	we include existential $\exis{O}$ and universal $\all{O}$ quantifiers over 
	these assignments.
	
The semantics of an observation descriptor $O$ is defined for an observation $m = \tuple{\chan\coma 
\datfun\coma k\coma \pred}$ as follows: 
\smallskip

\noindent
$
\begin{array}{l c l @{\ } | @{\ }l c l}
 m \models \chan'& \text{\bf iff} & \chan = \chan'\quad 
 &\quad 

m \models \neg \chan' & \text{\bf iff} & \chan \neq \chan'\\

 m \models \dat' & \text{\bf iff} &\datfun(\dat') &\quad 

 m \models \neg \dat' & \text{\bf iff} & \neg \datfun(\dat') \\

m \models k' & \text{\bf iff} & k = k'&\quad 
m \models \neg k' & \text{\bf iff} & k \neq k' 
\end{array}\\
\begin{array}{lcc}
			m \models \cv\quad \textbf{iff}\quad \text{for
                          all}\ c \in \pred\ \text{we have}\ 
			c \models \cv&\\

			m \models \neg\cv \ \ \textbf{iff}\ \ \text{there is}\ c \in \pred\ \text{such that}\ 
			c \not\models \cv&\\

			m \models \exis{O} \ \ \textbf{iff}\ \ \text{there is}\ c \in \pred \ \text{such that}\ 

			\tuple{\chan, d, k, \{c\}} \models O&\\

			m \models \all{O} \ \ \textbf{iff}\ \ \text{for all}\ c \in \pred\ \text{it holds that}\ 
			\tuple{\chan, d, k, \{c\}} \models O&\\
		
			m \models O_1 \vee O_2 \quad \textbf{iff}\quad \text{either}\ m \models O_1\ \text{or}\ m 
			\models O_2&\\

			m \models O_1 \wedge O_2 \quad \textbf{iff}\quad  m \models O_1\ \text{and}\ m 
			\models O_2&
	\end{array}
	$
\smallskip

We only comment on the semantics of the descriptors $\exis{O}$ and $\all{O}$ and the rest are standard propositional formulas.
The descriptor $\exis{O}$ requires that at least one
assignment $c$ to the common variables in the sender predicate
$\pred$ satisfies $O$.
Dually $\all{O}$ requires that all assignments in $\pred$ satisfy
$O$.
Using the former, we express properties where we require that the
sender predicate has a possibility to satisfy $O$ while using the
latter we express properties where the sender predicate can only
satisfy $O$.
For instance, both observations $\tuple{\chan, \datfun, k, \cv_1\vee\neg\cv_2}$ 
and $\tuple{\chan, \datfun, k, \cv_1}$
satisfy $\exis{\cv_1}$ while only the latter satisfies
$\all{\cv_1}$.
Furthermore, the observation descriptor $\all{\false}\wedge\chan=\star$ says
that a message is sent on the broadcast channel with a false
predicate.
That is, the message cannot be received by other agents.
In our RMS example in Section~\ref{sec:exp}, the
descriptor $\exis{(\typecvar=\taval)} \wedge 
\all{(\typecvar=\taval)}$ says that the message is intended exactly for
robots of type-$1$.

Note that the semantics of $\exis{O}$ and $\all{O}$ (when nested) ensures that 
the outermost cancels the inner ones, e.g., $\exis{(O_1\vee 
(\all{(\exis{O_2})}))}$ is equivalent to $\exis{(O_1\vee O_2)}$.
Thus, we assume that they are written in the latter normal 
form.

We interpret \ltal formulas over system computations:
\begin{definition}[System computation]~\label{def:syscomp}
A system computation $\rho$ is a function from natural numbers $N$ to
$\Exp{\sysvar}\times M$ where $\sysvar$ is the set of state variable
propositions and $M=\schan\times 2^{\sdat}\times K\times\dexp{\scv}$
is the set of possible observations. That is, $\rho$ includes values
for the variables in $\Exp{\sysvar}$ and an observation in $M$ at each
time instant.
\end{definition}

We denote by $\sstate{\id}$ the system state at the $i$-th 
time point of the system computation.
Moreover, we denote the suffix of $\tracevar{}$ starting with the $i$-th
state by $\tracevar{\geq i}$ and we use $m_{i}$ to denote the observation
$\tuple{\chan, \datfun, k, \pred}$ in $\tracevar{}$ at time point
$i$.

The semantics of an \ltal formula $\varphi$ is defined for a computation 
$\tracevar{}$ at a time point $\id$ as follows:\smallskip

	$\begin{array}{lcc}		
			\tracevar{\geq i}\models v\ \ \textbf{iff}\ \ \sstate{\id} \models v\quad
		\text{and} \quad
			 \tracevar{\geq i}\models \neg v\ \ \textbf{iff}\ \  \sstate{\id} 
			 \not\models v;\\		
	
			\tracevar{\geq i}\models \phi_2\vee\phi_2\ \ \textbf{iff}\ \ \tracevar{\geq 
			i}\models\phi_1\ \text{ or }\ \tracevar{\geq 
			i}\models\phi_2;\\

			\tracevar{\geq i}\models \phi_2\wedge\phi_2\ \ \textbf{iff}\ \ 
			\tracevar{\geq i}\models\phi_1\ \text{ and }\ \tracevar{\geq 
			i}\models\phi_2;\\

			\tracevar{\geq i}\models \phi_1\Until\phi_2\ \ \textbf{iff}\ \text{there exists}\ 
			j\geq i\ \text{s.t.}\ \tracevar{ \geq j} \models \phi_2\ \text{and,}\\ \hfill\text{for 
			every}\ i\leq k<j,\ \tracevar{\geq k}\models\phi_1;\qquad\\

			\tracevar{\geq i}\models \phi_1\Release\phi_2\ \ \textbf{iff}\ \text{for every}\ j 
			\geq i\ \text{either}\ \tracevar{\geq j} \models \phi_2\ \text{or}\\ \hfill\text{there exists}\ i \leq 
			k<j\ \text{s.t.}\ \tracevar{\geq k}\models\phi_1;\\
	
			\tracevar{\geq i}\models\nxt{O}\phi\ \ \textbf{iff}\ \ m_i\models O\ 
			\text{ and }\  \tracevar{\geq i+1}\models\phi;\\			
	
			\tracevar{\geq i}\models\alws{O}\phi\ \ \textbf{iff}\ m_i\models O\ 
			\text{ implies }\  \tracevar{\geq i+1}\models\phi.
\end{array}$\smallskip

Intuitively, the temporal formula $\nxt{O}\phi$ is satisfied on the computation 
$\tracevar{}$ at point $\id$ if the observation $m_i$ satisfies $O$ \emph{and}
$\phi$ is satisfied on the suffix computation $\tracevar{\geq \id+1}$.
On the other hand, the formula $\alws{O}\phi$ is satisfied on the computation 
$\tracevar{}$ at point $\id$ if $m_i$ satisfying $O$ \emph{implies}
that $\phi$ is satisfied on the suffix computation $\tracevar{\geq \id+1}$.
Other formulas are interpreted exactly as in classic \ltl.

With observation descriptors we can refer to the intention of
senders.
For example,
$O_1:=\exis{(\typecvar=\taval)}\wedge\exis{(\typecvar=\tbval)}\wedge
\all{(\typecvar=\taval \vee \typecvar=\tbval)}$ specifies that the target of the
message is \qt{exactly and only} both type-1 and type-2 robots.
Thus, we may specify that whenever the line agent \qt{$l$} recruits for a product with identity 1, 
 it notifies both type-1 and type-2 robots:
$\g((\prdlvar=\val{1} \wedge \stlvar=\pendval \wedge \nxt{l\wedge \chan=\toall}\true)
\rightarrow \nxt{O_1}\true)$. Namely, whenever the line agent is in the
pending state and tasked with product $1$ it notifies both type-1 and
type-2 robots by a broadcast. 
The pattern ``After $q$ have exactly two $p$ until $r$''
\cite{MR15,DAC99} can be easily expressed in \ltl and can be used to
check the formation protocol. 
Consider the following formulas.
Let $$\varphi_1:=\langle \msgdvar=\teamval\wedge \nodvar=\val{2} \wedge
  \exis{(\typecvar=\taval)}\rangle\true$$, i.e., a team message is sent to
type-1 robots requiring two robots.
Let $$\varphi_2:=\nxt{\msgdvar=\formval \wedge \exis{(\typecvar=\taval
    )}}\true$$, i.e., a formation message is sent to type-1 robots.
Let $\varphi_3:=\nxt{\chan=A}\true$, i.e., a message is sent on
channel $A$.
Then,
``After $\varphi_1$ have exactly two $\varphi_2$ until $\varphi_3$''
says that whenever a team message is sent to robots of type-1
requiring two robots, then two form messages destined for type-1
robots will follow before using the multicast channel.
That is, two type-1 robots join the team before a (blocking) multicast
on channel $A$ may become possible.

We can also reason at a
local rather than a global level.
For instance, we can specify that robots follow a ``correct''
utilisation of channel $A$. Formally,
let $O_1(t):=\msgdvar{=}\teamval \wedge
  \exis(\typecvar{=}\val{t})$, i.e., a team message is sent
  to robots of type $\val{t}$;
  $O_2(k,t):=\msgdvar{=}\formval \wedge
  \neg k \wedge
  \nodvar{=}\val{0} \wedge \exis{(\neg \assigncvar \wedge
    \typecvar{=}\val{t})}$, i.e., a robot different from $k$ sends a
  form message saying no more robots are needed and this message is
  sent to unassigned type $\val{t}$ robots; and let
  $O_3(t):=\msgdvar{=}\assembleval \wedge \chan=A \wedge
    \readycvar{=}\val{2} \wedge \exis{(\typecvar{=}\val{t})}$, i.e., an
    assembly message is sent on channel $A$ to robots of type
    $\val{t}$ who reached stage 2 of the production.
    Thus, for robot $k$ of type $\val{t}$, the formulas
    
    \smallskip
    \noindent
    $ \small
    \begin{array}{l@{\ \ }l}
    (1) & \varphi_1(t):=(\lnklvar{\neq} A) \WaitFor \nxt{O_1(t)}\true\\
      (2)&\varphi_2(k,t):=\g(\alws{O_2(k,t)\vee O_3(t)}\varphi_1(t))
    \end{array}
    $
    
    \smallskip
\noindent
    state that: (1) robots are not connected to channel $A$ until they get
a team message, inviting them to join a team; (2) if either they
are not selected ($O_2(k,t)$) or they finished production after selection
 ($O_3(t)$) then they disconnect again until the next team message.
This reduces to checking the  ``correct'' utilisation of channel $A$ to individual
level, by verifying these properties on all types of robots
independently. 
%
%
By allowing the logic to relate to the set of targeted robots, verifying all
 targeted robots separately entails the
correct \qt{group usage} of channel $A$.


%
\comment{
\begin{figure*}
$\small
\begin{array}{c}
\band{k}{}{
\tuple{
\begin{array}{c}
\msf{prd}_k=1
\end{array}
}\to  
\tuple{(\anglebr{k\wedge \msf{\chan=lnk}_k}\false)\ \Until 
\bnxt{
\begin{array}{l c l l}
\msf{d(\rulename{msg})=team} \wedge k\wedge \msf{\chan=\star\wedge 
d(\rulename{no})=2}\wedge\\
\exis{(\msf{cv_1=\rulename{t1}\vee cv_1=\rulename{t2}})}\wedge
{\msf{d(\rulename{lnk})=lnk}_k}\\
\end{array}
}\true}}\qquad\qquad\qquad\qquad\qquad\\[4ex]

\begin{array}{cc}
\wedge\band{k'\neq k}{}{}{\tuple{
\begin{array}{c}
\msf{\neg asgn}_{k'}\wedge \\ 
(\msf{role}_{k'}=\rulename{t1}\vee\\ 
\msf{role}_{k'}=\rulename{t2})
\end{array}}
}\to
\tuple{
\anglebr{
\begin{array}{l c l l}
\msf{d(\rulename{msg})=team} \wedge
 k\wedge\\ {\msf{d(\rulename{lnk})=lnk}_k}\wedge \msf{\chan=\star}\wedge\\
\exis{(\msf{cv_1=\rulename{t1}\vee cv_1=\rulename{t2}})}
\end{array}
}
\tuple{
\tuple{\msf{lnk}_{k'}=\msf{d(lnk)\wedge\msf{ready}_{k'}}}
 \Release\  
{
\anglebr{\begin{array}{c}
\msf{d(\rulename{msg})=assemble}\wedge\\ k\wedge 
\msf{\chan=lnk}_k\end{array}}\false
}}}
\end{array}
\end{array}
$

\end{figure*}
}
%
%

To model-check \ltal formulas we first need to translate them to B\"uchi 
automata. The following theorem states that the set of computations satisfying a
given formula are exactly the ones accepted by some finite automaton
on infinite words. Before we proceed with the theorem, we fix the following notations.
Given a tuple $T = \tuple{t_1, \dots, t_n}$, we use $T[i]$ to denote the 
element $t_i$ of $T$ and $T[x/i]$ to denote the tuple where $t_i$ is replaced 
with $x$. \vspace{-.6mm}

\begin{theorem} \label{thm:main}
For every \ltal formula $\phi$, there is an ABW $A_{\phi}=\conf{Q, \Sigma, M, \transmain, q_0,  F\subseteq Q }$ such that $\lang{A_{\phi}}$ is exactly the set of computations satisfying the formula $\phi$.
\end{theorem}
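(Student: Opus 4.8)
The plan is to adapt the classical Vardi--Wolper translation of \ltl into alternating B\"uchi automata to the extended syntax of \ltal, treating the two observation modalities $\nxt{O}$ and $\alws{O}$ as conditioned variants of the usual next operator. Since $\phi$ is already given in positive normal form, I take the state space $Q$ to be the set of subformulas of $\phi$ (its closure), with $q_0 = \phi$; this set has size linear in $\size{\phi}$. A letter of the automaton is a pair consisting of a state-variable valuation $\sigma \in \Sigma = \Exp{\sysvar}$ together with an observation $m \in M = \schan \times 2^{\sdat} \times K \times \dexp{\scv}$, matching the computations of Definition~\ref{def:syscomp}. Note that although $\dexp{\scv}$ is doubly exponential it is still finite, so $M$ is a finite alphabet and $A_{\phi}$ is a genuine ABW on infinite words.

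The transition function $\transmain$ is then defined by structural induction on the subformula indexing the current state, returning a positive Boolean combination over $Q$. For the literals I use the standard clauses ($\transmain(v,\sigma,m) = \true$ iff $v \in \sigma$, dually for $\neg v$), and $\transmain$ distributes over $\vee$ and $\wedge$. The fixpoint cases unfold the operators as usual: $\transmain(\phi_1 \Until \phi_2, \sigma, m) = \transmain(\phi_2,\sigma,m) \vee (\transmain(\phi_1,\sigma,m) \wedge (\phi_1 \Until \phi_2))$ and dually $\transmain(\phi_1 \Release \phi_2,\sigma,m) = \transmain(\phi_2,\sigma,m) \wedge (\transmain(\phi_1,\sigma,m) \vee (\phi_1 \Release \phi_2))$. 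The genuinely new cases are the two observation modalities. Here I first decide the predicate $m \models O$ using the descriptor semantics already fixed in Section~\ref{sec:logic}; this is a purely \emph{local}, finite computation over the concrete observation $m = \tuple{\chan,\datfun,k,\pred}$, because $\pred$ is a finite set of assignments and the quantifiers $\exis{O}$, $\all{O}$ range over it. I then set $\transmain(\nxt{O}\phi, \sigma, m)$ to be $\phi$ when $m \models O$ and $\false$ otherwise, while $\transmain(\alws{O}\phi, \sigma, m)$ is $\phi$ when $m \models O$ and $\true$ otherwise, capturing directly the asymmetry between the possible and necessary descriptors. Finally I let $F$ be the set of subformulas that are not of the form $\phi_1 \Until \phi_2$, so that along every branch an unfulfilled until-eventuality is rejected while a release-formula may be sustained forever.

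Correctness, namely that $\lang{A_{\phi}}$ coincides with the set of computations $\tracevar{}$ satisfying $\phi$, follows from the standard run-tree argument by induction on the structure of $\phi$: an accepting run of $A_{\phi}$ on $\tracevar{}$ encodes a witnessing decomposition of the satisfaction relation at each position, and conversely a satisfying computation yields an accepting run. The Boolean cases are immediate from $\transmain$; the observation modalities reduce to the local decision $m_i \models O$, which agrees with the semantics by construction; and the two temporal operators require the usual K\"onig's-lemma and acceptance-condition reasoning, where $F$ guarantees that a branch looping forever in an until is rejected and one looping in a release is accepted. The main obstacle, and the only genuine departure from the textbook \ltl construction, is the treatment of the observation descriptors: I must verify that the local decision $m \models O$ is well defined and faithful to the descriptor semantics even under \emph{nested} quantifiers. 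This is precisely where the normal form noted in Section~\ref{sec:logic} (an outermost $\exis{}$ or $\all{}$ subsuming the inner ones) is used, ensuring that $\transmain(\nxt{O}\phi,\sigma,m)$ and $\transmain(\alws{O}\phi,\sigma,m)$ depend only on the truth value of $m \models O$ and not on any residual automaton structure for $O$ itself. Once this is settled, the inductive correctness proof carries over unchanged.
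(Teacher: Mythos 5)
Your construction coincides with the paper's: same subformula state space, same pair alphabet $\Sigma \times M$, same unfolding of $\Until$/$\Release$, and your case-split on $m \models O$ for $\nxt{O}\phi$ and $\alws{O}\phi$ is exactly the paper's $\phi \wedge \transaux(O,m)$ and $\phi \vee \transaux(\overline{O},m)$, where $\transaux$ is the paper's auxiliary Boolean evaluation of descriptors (including the $m[c/4]$ recursion for nested quantifiers). The only cosmetic difference is the acceptance set ($F$ as the complement of the $\Until$-subformulas rather than exactly the $\Release$-subformulas), which is equivalent since only $\Until$/$\Release$ states can recur infinitely often along a branch; so your proposal is correct and essentially the paper's proof.
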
\vspace{-3.53mm}
\begin{proof} 
The set of states $Q$ is the set of all sub formulas of $\phi$ with $\phi$ 
being the initial state $q_0$.
The automaton has two alphabets, namely a \emph{state-alphabet} $\Sigma = 
2^{\sysvar}$ and a \emph{message-alphabet} $M = \schan \times 2^{\sdat} \times 
K \times \dexp{\scv}$.
The set $F$ of accepting states consists of all sub formulas of the form 
$\phi_1 \Release \phi_2$.
The transition relation $\transmain: Q \times \Sigma \times M \rightarrow 
\B^{+}(Q)$ is defined inductively on the structure of $\phi$.
It also relies on an auxiliary function $\transaux: O \times M \rightarrow 
\bb$ to evaluate observations and is defined recursively on $O$ as follows:

\begin{itemize}[label={$\bullet$}, topsep=0pt, itemsep=0pt, leftmargin=10pt]
	\item
		$\transaux(\cv, m) = \band{}{}{\hspace{-.5mm}_{c \in m[4]}\ c(\cv)}\ $  and\vspace{-2mm}   \\
		$\transaux(\neg \cv, m)=\bor{}{}{\hspace{-1mm}_{c \in m[4]}\ \neg c(\cv)}$;\vspace{-1mm}

	\item
		$\transaux(\chan, m) = \true$ if $m[1] = \chan$ and $\false$ otherwise;

	\item
		$\transaux(\neg \chan, m) = \true$ if $m[1]\neq\chan$ and $\false$ 
		otherwise;

	\item
		$\transaux(\dat, m) = \true$ if $m[2](\dat)$ and $\false$ otherwise;

	\item
		$\transaux(\neg \dat, m) = \true$ if $\neg m[2](\dat)$ and $\false$ 
		otherwise;

	\item
		$\transaux(k, m) = \true$ if $m[3] = k$ and $\false$ otherwise;

	\item
		$\transaux(\neg k, m) = \true$ if $m[3]\neq k$ and $\false$ otherwise;

	\item
		$\transaux(O_1 \vee O_2, m) = \transaux(O_1, m) \vee \transaux(O_2, m)$ 
		and  \\
		$\transaux(O_1 \wedge O_2, m) =\transaux(O_1, m) \wedge \transaux(O_2, m)$;

	\item
		$\transaux(\exis{O}, m) = \bor{}{}{\hspace{-1mm}_{c \in m[4]}\ \transaux(O, m[c/4])}$;\vspace{-2mm}

	\item 
		 $\transaux(\all{O},m)=\band{}{}{\hspace{-.5mm}_{c\in m[4]}\ \transaux(O,m[c/4])}$.
\end{itemize}

The transition relation $\transmain$ is defined as follows:

\begin{itemize}[label={$\bullet$}, topsep=0pt, itemsep=0pt, leftmargin=10pt]
	\item
		$\transmain(v, \sigma, m) = \true$ if $v\in\sigma$ and $\false$ otherwise;

	\item
		$\transmain(\neg v, \sigma, m) = \true$ if $ v \not \in \sigma$ and 
		$\false$ otherwise;

	\item
		$\transmain(\phi_1 \vee \phi_2, \sigma, m) = \transmain(\phi_1, \sigma, m) 
		\vee \transmain(\phi_2, \sigma, m)$;

	\item
		$\transmain(\phi_1 \wedge \phi_2, \sigma, m) = \transmain(\phi_1, \sigma, 
		m) \wedge \transmain(\phi_2, \sigma, m)$;

	\item
		$\transmain(\phi_1 \Until \phi_2, \sigma, m) = \transmain(\phi_1, \sigma, 
		m) \wedge\phi_1 \Until \phi_2 \vee \transmain(\phi_2, \sigma, m)$;

	\item
		$\transmain(\phi_1 \Release \phi_2, \sigma, m) = (\transmain(\phi_1, 
		\sigma, m) \vee \phi_1 \Release \phi_2) \wedge \transmain(\phi_2, \sigma, 
		m)$;
	
	\item
		$\transmain(\nxt{O}\phi, \sigma, m) = \phi \wedge \transaux(O, m)$;
		
	\item
		$\transmain(\alws{O}\phi, \sigma, m) = \phi \vee \transaux(\overline{O}, 
		m)$.
\end{itemize}

The proof of correctness of this construction proceeds by induction on the structure of $\phi$.
\end{proof}
Observe that, from Theorem~\ref{thm:main}, the number of states in $A_{\phi}$ 
is linear in the size of $\phi$, i.e., $\size{Q}$ is in $\bigo{\size{\phi}}$. 
The size of the transition relation $\size{\transmain}$ is in $\bigo{\size{Q}^2.\size{\Sigma}.\size{M}}$, i.e., $\size{\transmain}$ is in
${\size{\phi}^2.\size{\schan}.\size{K}.\Exp{\bigo{\size{\sysvar}+\size{\sdat}+\Exp{\size{\scv}}}}}$. Furthermore the
evaluation function $\transaux$ can be computed in
$\bigo{\size{O}.\size{\scv}}$ time and in
$\bigo{\log{\size{O}}+\log{\size{\scv}}}$ space. Finally, the size of
the alternating automaton $\size{A_{\phi}}$ is in $\bigo{\size{Q}.\size{\transmain}}$, i.e., $\size{A_{\phi}}$ is in $(\size{\phi})^3.\size{\schan}.\size{K}.\Exp{\bigo{\size{\sysvar}+\size{\sdat}+\Exp{\size{\scv}}}}$.

\begin{proposition}[\cite{mh84}]\label{prop:aton}
For every alternating B\"uchi automaton $A$ there is a nondeterministic B\"uchi automaton $A'$ such that $\lang{A'}=\lang{A}$ and $\size{Q'}$ is in $\Exp{\bigo{\size{Q}}}$.
\end{proposition}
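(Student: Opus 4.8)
The final statement is Proposition~\ref{prop:aton}, the classical theorem of Miyano and Hayashi~\cite{mh84}, and the plan is to reprove it by the \emph{breakpoint} (two-set subset) construction. Write the given ABW as $A=\conf{Q,\Gamma,\delta,q_0,F}$ with a single input alphabet $\Gamma$ (in our application $\Gamma=\Sigma\times M$, and nothing below depends on its internal structure), where $\delta:Q\times\Gamma\to\B^{+}(Q)$. First I would recall that a run of $A$ on a word $w=a_0a_1\cdots$ is a $Q$-labelled DAG whose frontier at level $i$ is a set of states, obtained from level $i$ by choosing, for each state $q$ in the frontier, a set of successors that is a model of $\delta(q,a_i)$; the run is accepting iff every infinite branch visits $F$ infinitely often. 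The goal is to make the universal (conjunctive) branching deterministic via a subset component while tracking, in a second component, which branches still \emph{owe} a visit to $F$.

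Concretely, I would take $A'=\conf{Q',\Gamma,\delta',q_0',F'}$ with $Q'=\set{(S,O)\mid O\subseteq S\subseteq Q}$, initial state $q_0'=(\set{q_0},\emptyset)$, and accepting set $F'=\set{(S,\emptyset)\mid S\subseteq Q}$. For a letter $a$, a pair $(S',O')$ is a successor of $(S,O)$ under $\delta'$ exactly when $S'$ is a model of $\band{q\in S}{}{\delta(q,a)}$ and the obligation component updates as follows: if $O\neq\emptyset$ then $O'\subseteq S'\setminus F$ with $O'$ a model of $\band{q\in O}{}{\delta(q,a)}$, while if $O=\emptyset$ then $O'=S'\setminus F$. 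Thus the $S$-component performs the ordinary subset tracking of the whole frontier, and the $O$-component carries exactly those branches that have not visited $F$ since the last time $O$ was emptied; reaching $O=\emptyset$ is a \emph{breakpoint} at which all pending obligations have been discharged, and the B\"uchi condition $F'$ demands infinitely many breakpoints.

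The correctness argument establishes $\lang{A'}=\lang{A}$ in two directions. For $\lang{A'}\subseteq\lang{A}$, from an accepting run of $A'$ I would read off the successive $S$-frontiers; since each $S'$ models the conjunction of the relevant transition formulas, stitching these frontiers together yields a legal run DAG of $A$, and the infinitely many breakpoints force every branch to meet $F$ infinitely often. For $\lang{A}\subseteq\lang{A'}$, given an accepting run DAG of $A$ I would let $S_i$ be its $i$-th frontier and define $O_i$ by the breakpoint rule above; the only thing to verify is that $O_i=\emptyset$ for infinitely many $i$, which is where I expect the main difficulty to lie. The argument is that if, from some breakpoint on, $O$ never emptied again, then the obligation sets would form an infinite finitely-branching subtree of the run DAG none of whose nodes lie in $F$ (as $O'\cap F=\emptyset$ by construction); by K\"onig's lemma this subtree would contain an infinite branch avoiding $F$, contradicting acceptance of the DAG. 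Finiteness of $Q$ is essential here to guarantee finite branching and hence the applicability of K\"onig's lemma.

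Finally, the size bound is immediate: each state of $Q$ sits in exactly one of three zones relative to a pair $(S,O)$ --- outside $S$, in $S\setminus O$, or in $O$ --- so $\size{Q'}\le 3^{\size{Q}}$, which is in $\Exp{\bigo{\size{Q}}}$ as required. The genuinely non-routine step is the breakpoint correctness, namely the K\"onig's-lemma equivalence between ``infinitely many breakpoints'' and ``every branch is accepting''; the subset bookkeeping and the state count are straightforward.
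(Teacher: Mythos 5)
Your overall route is the right one --- the paper itself gives no proof here and simply cites Miyano--Hayashi, and the breakpoint construction you describe is exactly what that citation refers to --- but your transition rule for the obligation component is wrong, and wrong in a way that changes the recognised language. When $O\neq\emptyset$ you require the new obligation set $O'$ to \emph{itself} be a model of $\band{q\in O}{}{\delta(q,a)}$ and, in addition, to satisfy $O'\subseteq S'\setminus F$. In the correct construction one first chooses a model $O''\subseteq S'$ of $\band{q\in O}{}{\delta(q,a)}$ and then sets $O'=O''\setminus F$: obligations are discharged precisely \emph{by entering} $F$, and the resulting $O'$ need not be a model of anything. Your version instead demands that the obligations be dischargeable while \emph{avoiding} $F$, which is strictly stronger, because positive Boolean formulas are monotone under adding states but not under removing them. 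Concretely, take $Q=\set{q,r,p}$, $F=\set{p}$, $\delta(q,a)=r$, $\delta(r,a)=p$, $\delta(p,a)=p$, initial state $q$. The word $a^{\omega}$ is accepted by $A$: the unique run DAG is $q\to r\to p\to p\to\cdots$ and its unique branch visits $p$ infinitely often. But in your $A'$, after one step we reach $(\set{r},\set{r})$, and now every model of $\delta(r,a)$ must contain $p\in F$, while you insist $O'\subseteq S'\setminus F$; no successor exists, so $a^{\omega}\notin\lang{A'}$ and $\lang{A'}\neq\lang{A}$.

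The same slip breaks the direction $\lang{A}\subseteq\lang{A'}$ of your correctness argument: given an accepting run DAG of $A$ you ``define $O_i$ by the breakpoint rule above'', but with your rule no legal $O_i$ need exist, as the example shows. Once the rule is repaired ($O'=O''\setminus F$ for some model $O''\subseteq S'$ of $\band{q\in O}{}{\delta(q,a)}$ when $O\neq\emptyset$, and $O'=S'\setminus F$ at breakpoints), the rest of what you wrote goes through essentially verbatim: the K\"onig's-lemma argument that never reaching another breakpoint yields an infinite $F$-avoiding branch of the run DAG is exactly the crux, the containment $\lang{A'}\subseteq\lang{A}$ is as you say, and the count $\size{Q'}\le 3^{\size{Q}}$, which is in $\Exp{\bigo{\size{Q}}}$, stands.
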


By Theorem~\ref{thm:main} and Proposition~\ref{prop:aton}, we have that:
\begin{corollary} For every formula $\phi$ there is a B\"uchi automaton $A$ with a state-alphabet $\Sigma=2^{\sysvar}$ and a message-alphabet 
$M=\schan\times 2^{\sdat}\times K\times\dexp{\scv} $ where $A=\langle Q, \Sigma,$  $M, S^0, \delta, F\rangle$ and $\lang{A}$ is exactly the set of computations satisfying the formula $\phi$ such that:

\begin{itemize}[label={$\bullet$}, topsep=0pt, itemsep=0pt, leftmargin=10pt]
\item $\size{Q}$ is in $\Exp{\bigo{\size{\phi}}}$
and 
  $\size{\delta}$ is in $\bigo{\size{Q}^2.\size{\Sigma}.\size{M}}$, i.e., $\size{\delta}$ is in ${\size{\schan}.\size{K}.\Exp{\bigo{\size{\phi}+\size{\sysvar}+\size{\sdat}+\Exp{\size{\scv}}}}}$

\item The space requirements for building the B\"uchi automaton on-the-fly is $\nlog{(\size{Q}.\size{\delta})}$, i.e., it is in\\ $\bigo{\log{\size{\schan}}+\log{\size{K}}+\size{\phi}+\size{\sysvar}+\size{\sdat}+\Exp{\size{\scv}}}$

\item The size of the B\"uchi automaton is $\size{Q}.\size{\delta}$, i.e., $\size{A}$ is in $\size{\schan}.\size{K}.\Exp{\bigo{\size{\sysvar}+\size{\sdat}+\Exp{\size{\scv}}}}$
\end{itemize}
\label{cor:1}
\end{corollary}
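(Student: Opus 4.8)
The plan is to derive the nondeterministic automaton by composing the two results that immediately precede the statement, and then to propagate their size and space estimates. First I would invoke Theorem~\ref{thm:main} to obtain the ABW $A_{\phi}$ whose language is exactly the set of computations satisfying $\phi$; from the size analysis following that theorem I record that its state set satisfies $\size{Q}\in\bigo{\size{\phi}}$, that its transition relation satisfies $\size{\transmain}\in\bigo{\size{Q}^2\cdot\size{\Sigma}\cdot\size{M}}$, and that the auxiliary evaluation $\transaux$ uses only logarithmic workspace. I would then apply Proposition~\ref{prop:aton} (the alternation-removal construction of~\cite{mh84}) to $A_{\phi}$, producing an equivalent nondeterministic B\"uchi automaton (NBW) $A$ whose state set I continue to call $Q'$ to distinguish it from the ABW states $Q$. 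Language equivalence $\lang{A}=\lang{A_{\phi}}$ is given directly by the proposition, so $\lang{A}$ is exactly the set of computations satisfying $\phi$; the alphabets $\Sigma=\Exp{\sysvar}$ and $M=\schan\times\Exp{\sdat}\times K\times\dexp{\scv}$ are inherited unchanged.

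For the state bound, Proposition~\ref{prop:aton} gives $\size{Q'}\in\Exp{\bigo{\size{Q}}}$, and substituting $\size{Q}\in\bigo{\size{\phi}}$ yields $\size{Q'}\in\Exp{\bigo{\size{\phi}}}$. For the transition bound I would use the generic estimate $\size{\delta}\in\bigo{\size{Q'}^2\cdot\size{\Sigma}\cdot\size{M}}$ for a B\"uchi automaton over the product alphabet $\Sigma\times M$, and then expand the alphabet factors via $\size{\Sigma}=\Exp{\size{\sysvar}}$ and $\size{M}=\size{\schan}\cdot\Exp{\size{\sdat}}\cdot\size{K}\cdot\dexp{\size{\scv}}$. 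Collecting the exponents gives $\size{\delta}$ in $\size{\schan}\cdot\size{K}\cdot\Exp{\bigo{\size{\phi}+\size{\sysvar}+\size{\sdat}+\Exp{\size{\scv}}}}$, and the product $\size{A}=\size{Q'}\cdot\size{\delta}$ then yields the stated overall size (matching $\size{Q}\cdot\size{\delta}$ under the corollary's renaming of $Q'$ to $Q$).

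The space claim is obtained by never materialising $A$ in full. I would argue that the subset-style construction of~\cite{mh84} can be run on the fly: a single macro-state is a pair of subsets of the ABW state set, storable in space $\bigo{\size{Q}}=\bigo{\size{\phi}}$, and a single transition of $\delta$ is checked by evaluating the ABW transition $\transmain$ together with the observation predicate $\transaux$, each of which needs only the logarithmic workspace recorded above. Taking the logarithm of the product $\size{Q'}\cdot\size{\delta}$ in the standard \nlog\ bound then collapses the outer exponentials to additive terms, leaving the stated $\bigo{\log\size{\schan}+\log\size{K}+\size{\phi}+\size{\sysvar}+\size{\sdat}+\Exp{\size{\scv}}}$.

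The step I expect to require the most care is the bookkeeping of the doubly-exponential factor $\dexp{\size{\scv}}$ that enters through the predicate component of $M$: a sender predicate is a \emph{set} of assignments to the common variables, of which there are $\Exp{\size{\scv}}$, so there are $\dexp{\size{\scv}}$ such predicates. Threading this factor consistently through $\size{M}$, the transition count, and the on-the-fly space bound --- so that it survives as an outer exponential in $\size{A}$ and $\size{\delta}$ but appears only as the additive $\Exp{\size{\scv}}$ inside the logarithm for space --- is the main obstacle; the remaining arithmetic is routine substitution of the alphabet sizes into the generic automaton bounds.
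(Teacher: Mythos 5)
Your proposal is correct and follows exactly the paper's (implicit) derivation: the corollary is obtained by composing Theorem~\ref{thm:main} with Proposition~\ref{prop:aton} and substituting the alphabet sizes $\size{\Sigma}=\Exp{\size{\sysvar}}$ and $\size{M}=\size{\schan}\cdot\Exp{\size{\sdat}}\cdot\size{K}\cdot\dexp{\size{\scv}}$ into the generic state, transition, and on-the-fly space bounds, with the doubly-exponential predicate component handled just as you describe. One small remark: your arithmetic correctly yields $\size{A}$ in $\size{\schan}\cdot\size{K}\cdot\Exp{\bigo{\size{\phi}+\size{\sysvar}+\size{\sdat}+\Exp{\size{\scv}}}}$, so the absence of $\size{\phi}$ in the exponent of the corollary's final bullet is a typo in the paper rather than a mismatch in your proof.
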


We do not see the double exponential in the
automaton size with respect to common variables $\scv$ as a major
restriction.
Note that $|\scv|$ does not grow with respect
to the size of the formula or to the number of the agents.
Thus, if we limit the number of common variables (which
should be small), efficient verification can still be attainable.  

\begin{theorem}\label{thm:satisfiability}
The satisfiability problem of \ltal is \pspace-complete with respect to 
$\size{\phi}$, $\size{\sysvar},\ \size{\sdat},\ \log{\size{\schan}},\ 
\log{\size{K}}$ and \expspace with respect to $\size{\scv}$. 
\end{theorem}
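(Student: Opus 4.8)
The plan is to prove membership and hardness separately, building on the automaton construction already in place. For membership I would first note that, by Theorem~\ref{thm:main} together with Corollary~\ref{cor:1}, a formula $\phi$ is satisfiable if and only if the B\"uchi automaton $A$ of Corollary~\ref{cor:1} is nonempty, since $\lang{A}$ is precisely the set of computations satisfying $\phi$. Satisfiability thus reduces to checking $\lang{A}\neq\emptyset$.

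The point is to carry out this check \emph{on the fly}, never materialising $A$, which is doubly exponential in $\size{\scv}$. Nonemptiness of a B\"uchi automaton is witnessed by a lasso, that is, a reachable accepting state together with a cycle through it, which can be guessed nondeterministically while storing only a constant number of states plus a counter bounded by $\size{Q}$. By Corollary~\ref{cor:1}, $\size{Q}$ is in $\Exp{\bigo{\size{\phi}}}$ and, crucially, does \emph{not} depend on $\size{\scv}$; hence a state (a Miyano--Hayashi pair of subsets of the $\bigo{\size{\phi}}$ subformulas) and the counter each fit in $\bigo{\size{\phi}}$ bits. The only expensive resource is the computation of a single transition: to advance the lasso we guess one letter $(\sigma,m)\in\Sigma\times M$ and evaluate $\transmain$, which in turn invokes $\transaux$ on the guessed predicate $m[4]\in\dexp{\scv}$. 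Representing that predicate costs $\Exp{\size{\scv}}$ bits, while iterating $\transaux$ over its assignments for the $\exis{O}/\all{O}$ cases uses only an additional $\size{\scv}$-bit counter; all remaining components cost $\bigo{\size{\sysvar}+\size{\sdat}+\log\size{\schan}+\log\size{K}}$. This is exactly the on-the-fly budget reported in Corollary~\ref{cor:1}. Since the guessed letter is discarded as soon as the successor state is computed, this $\Exp{\size{\scv}}$ cost is paid afresh at each step rather than accumulated. Overall the procedure runs in nondeterministic space polynomial in $\size{\phi},\size{\sysvar},\size{\sdat},\log\size{\schan},\log\size{K}$ and exponential in $\size{\scv}$; Savitch's theorem ($\mathrm{NPSPACE}=\pspace$ and $\mathrm{NEXPSPACE}=\expspace$) then yields the claimed deterministic bounds.

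For \pspace-hardness I would embed \ltl. The classical next operator is recovered by $\X\psi\equiv\nxt{\true}\psi$, and the Boolean, $\Until$, and $\Release$ operators of \ltal are literally those of \ltl, so every \ltl formula over propositions identified with members of $\sysvar$ is, after a linear rewriting, an \ltal formula mentioning no channel, data, or common variable. Over the abstract computations of Definition~\ref{def:syscomp} the message component is unconstrained whenever the formula never refers to it, so this rewriting preserves satisfiability. As \ltl satisfiability is already \pspace-hard, satisfiability of \ltal is \pspace-hard in $\size{\phi}$ (and $\size{\sysvar}$), which together with the membership above gives \pspace-completeness; the \expspace claim with respect to $\size{\scv}$ is the upper bound just established.

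The main obstacle is the space accounting that divorces the doubly-exponential size of $A$ from the singly-exponential working space. The decisive observation is that the blow-up in $\size{\scv}$ resides entirely in the alphabet and transition relation and not in the state set, so it may be incurred once per guessed letter, as $\Exp{\size{\scv}}$ bits for a predicate, instead of being carried inside an automaton state. Verifying that the quantifier evaluation performed by $\transaux$ fits within this letter-local budget, and that the lasso search never needs to remember a previously guessed letter, are the details that demand the most care.
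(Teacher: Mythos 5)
Your proposal is correct and follows essentially the same route the paper intends: the paper states Theorem~\ref{thm:satisfiability} without spelling out a proof, but Corollary~\ref{cor:1} (built from Theorem~\ref{thm:main} and Proposition~\ref{prop:aton}) is set up precisely so that satisfiability reduces to on-the-fly nonemptiness checking of the B\"uchi automaton within the stated space budget --- polynomial in $\size{\phi},\ \size{\sysvar},\ \size{\sdat},\ \log{\size{\schan}},\ \log{\size{K}}$ and exponential only in $\size{\scv}$, the cost residing in the $\Exp{\size{\scv}}$-bit predicate component of each guessed letter rather than in the state space. Your hardness direction (embedding \ltl{} via $\nxt{\true}\phi$ for the next operator) is the standard argument underlying the paper's \pspace-completeness claim, so both halves match the intended proof.
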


\begin{theorem}\label{thm:modcheck}
The model-checking problem of \ltal is \pspace-complete with respect to $\size{Sys}$, $\size{\phi}$, $\size{\sysvar},\ \size{\sdat},\ \log{\size{\schan}},\\ \log{\size{K}}$ and \expspace with respect to $\size{\scv}$. 
\end{theorem}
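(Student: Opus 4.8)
The plan is to use the automata-theoretic approach to model checking, adapted to the doubly-labelled, observation-carrying computations of \ltal. Writing $Sys$ for the system presented symbolically as the discrete system of Definition~\ref{def:sys}, recall that $Sys\models\phi$ means that every computation of $Sys$ satisfies $\phi$; equivalently, $Sys$ \emph{violates} $\phi$ exactly when some computation of $Sys$ is a model of $\neg\phi$. So I would reduce model checking to a nonemptiness test, $Sys\models\phi$ iff $\lang{Sys}\cap\lang{A_{\neg\phi}}=\emptyset$, where $A_{\neg\phi}$ accepts precisely the computations satisfying $\neg\phi$.

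For the upper bound, apply Theorem~\ref{thm:main} to $\neg\phi$ and then Proposition~\ref{prop:aton}/Corollary~\ref{cor:1} to obtain an equivalent nondeterministic B\"uchi automaton $A_{\neg\phi}$ with $\Exp{\bigo{\size{\phi}}}$ states and the message alphabet $M$ and transition relation whose dependence on $\size{\scv}$ is recorded there. I would then search for nonemptiness of the product of $Sys$ with $A_{\neg\phi}$ on-the-fly: its configurations pair a system state (an assignment to $\sysvar$) with an automaton state, and a step synchronises a system transition $(s,m,s')\models\rho$ with a matching automaton transition reading the state label of $s$ and the observation $m=\tuple{\chan,\datfun,k,\pred}$ it emits. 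Nonemptiness is decided by the usual lasso search --- guess a path to an accepting state and a cycle through it, storing only the current configuration, one guessed cycle-root, and binary counters. The point of the accounting is that a single configuration, together with the transient handling of one step, is representable in space $\bigo{\size{Sys}+\size{\phi}+\size{\sysvar}+\size{\sdat}+\log\size{\schan}+\log\size{K}+\Exp{\size{\scv}}}$: the system state costs $\bigo{\size{\sysvar}}$, the automaton state $\bigo{\size{\phi}}$, and the only super-polynomial term is the predicate $\pred$ of the observation --- an element of $\dexp{\scv}$ --- whose representation and the evaluation of the auxiliary function $\transaux$ on $\exis{O}$ and $\all{O}$ (which ranges over the $\Exp{\size{\scv}}$ common-variable assignments in $\pred$) cost space $\Exp{\size{\scv}}$. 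This nondeterministic bound is polynomial in every parameter except $\size{\scv}$, in which it is single-exponential; by Savitch's theorem it yields the claimed deterministic \pspace (in $\size{Sys},\size{\phi},\size{\sysvar},\size{\sdat},\log\size{\schan},\log\size{K}$) and \expspace (in $\size{\scv}$) bounds.

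For the matching \pspace lower bound, I would give two embeddings. Restricting \ltal to descriptor-free formulas over trivial observations recovers plain \ltl, so \ltl model checking embeds into ours and gives \pspace-hardness in $\size{\phi}$. Because the system is presented symbolically, its explicit state space is exponential in $\size{\sysvar}$, and already plain reachability/invariance checking on a discrete-system description (Section~\ref{sec:bck}) is \pspace-hard; this yields hardness in $\size{Sys}$ and $\size{\sysvar}$. Together with the upper bound these give \pspace-completeness in the polynomially-bounded parameters, while the single-exponential dependence on $\size{\scv}$ is inherited, as an upper bound, from the same source as in Theorem~\ref{thm:satisfiability}, namely the space needed to hold and evaluate one observation predicate.

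I expect the crux to be the space bookkeeping in the common-variable parameter, and specifically arguing that the cost is single- rather than double-exponential in $\size{\scv}$. Although the message alphabet $M=\schan\times 2^{\sdat}\times K\times\dexp{\scv}$ has doubly-exponentially many symbols in $\size{\scv}$, the on-the-fly emptiness test never materialises $M$; it only ever holds one predicate $\pred$ --- a subset of the $\Exp{\size{\scv}}$ common-variable assignments, hence $\Exp{\size{\scv}}$ bits --- and streams the recursive evaluation of $\transaux$ over those assignments. Making precise that $\log\size{\delta}$ rather than $\size{\delta}$ governs the running space, as recorded in Corollary~\ref{cor:1}, and that no other component of a product configuration exceeds polynomial size, is the delicate step; the remainder is the routine adaptation of the Savitch and lasso arguments for B\"uchi nonemptiness.
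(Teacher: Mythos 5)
Your proposal is correct and follows essentially the same route the paper sets up: translate the (dualized) formula via Theorem~\ref{thm:main} and Proposition~\ref{prop:aton} into a nondeterministic B\"uchi automaton, then decide nonemptiness of the on-the-fly product with the symbolic system within the space bound recorded in Corollary~\ref{cor:1}, where the single-exponential term $\Exp{\size{\scv}}$ arises exactly as you argue---from storing and evaluating one observation predicate rather than materialising the message alphabet---and \pspace-hardness follows from the standard \ltl\ and symbolic-reachability embeddings. The only cosmetic adjustment is that, since \ltal\ is given in positive normal form, your $A_{\neg\phi}$ should be constructed as $A_{\overline{\phi}}$ using the paper's dual operation $\overline{\Theta}$.
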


\newcommand{\lowerbound}{

We now investigate the hardness of \ltal satisfiability and model-checking by showing a reduction from EXPSPACE tiling problem~\cite{Wang90,Boas97} to satisfiability and model-checking, providing EXPSPACE hardness. 

\begin{theorem}[hardness]
	\label{thm:hardness}
	The satisfiability and model-checking problems for properties in \ltal are EXPSPACE-hard.
\end{theorem}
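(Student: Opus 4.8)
The plan is to reduce from the \expspace-complete \emph{corridor tiling problem} \cite{Wang90,Boas97}: given a finite tile set $T$ with horizontal and vertical compatibility relations $H,V$, a fixed first row, and a width $2^n$, decide whether the corridor $\{0,\dots,2^n{-}1\}\times\mathbb{N}$ can be tiled compatibly. From an instance I would build, in time polynomial in $n$ and $\size{T}$, an \ltal formula $\phi$ that is satisfiable iff a tiling exists. Crucially the construction uses only $\size{\scv}=O(n)$ common variables (to name the $2^n$ columns) while keeping $\size{\sysvar},\size{\sdat},\log\size{\schan},\log\size{K}$ polynomial; since the source problem needs $2^{\Omega(n)}$ space, this yields an \expspace lower bound measured in $\size{\scv}$, matching the upper bounds of Theorems~\ref{thm:satisfiability} and~\ref{thm:modcheck}.

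I read a computation (Definition~\ref{def:syscomp}) as the tiling written cell by cell. First I force an $n$-bit counter stored in $\sysvar$ to increment modulo $2^n$ along the trace; its wrap-arounds split the trace into blocks of length $2^n$, one block per row, and inside a block the counter holds the current column address $\bar a$. Each time point also records in $\sysvar$ the tile placed at the current cell. Because the sweep visits every column of every row, a single $\g$ over the local (per-cell) constraints already realises the universal quantification over the $2^n$ columns --- this is how the exponential width is tamed without enumerating columns. \emph{Horizontal} compatibility then relates the tiles of two consecutive time points of a block, which is plain \ltl using $\nxt{\cdot}$; fixing the first row and a $\WaitFor$ keeping the counter honest are equally routine.

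The interesting case is \emph{vertical} compatibility, which ties column $\bar a$ of a row to column $\bar a$ of the next, i.e.\ time points $2^n$ apart; this is exactly what \ltl cannot pin down succinctly, and where the observation descriptors carry the load. I let the observation at each cell carry, in its predicate $\pred$, the \emph{neighbouring} row encoded as a set of assignments to $\scv$ (membership of an (address, bit) pair read as the corresponding tile bit, so that every subset denotes a well-defined row and no functional side condition on $\pred$ is needed). The central gadget is \emph{random access} into $\pred$: I copy the current address from $\sysvar$ into data variables $d_1,\dots,d_n\in\sdat$ (tying them by $\g\bigwedge_i((a_i\wedge\nxt{d_i}\true)\vee(\neg a_i\wedge\nxt{\neg d_i}\true))$), and then inside a single descriptor compare these data bits with the common-variable bits $x_1,\dots,x_n\in\scv$ of the quantified assignment, \[ \exis{\Bigl(\textstyle\bigwedge_{i=1}^{n}\bigl((d_i\wedge x_i)\vee(\neg d_i\wedge\neg x_i)\bigr)\ \wedge\ \beta\Bigr)}, \] which isolates the assignment of $\pred$ whose address equals $\bar a$ and exposes (through $\beta$) its tile bits. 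Reading the neighbour's tile and the current tile (from $\sysvar$) at the same time point, each forbidden vertical pair is then excluded locally, co-located at a single position of the trace.

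The step I expect to be the main obstacle is anchoring this carried predicate to the \emph{true} neighbouring row: $\pred$ must equal, column for column, the row recorded by the state variables of the adjacent block, and this is a relation between an entire exponential-width row and a set accessed only through the monadic tests $\exis{\cdot},\all{\cdot}$ (``$\pred$ meets'' / ``$\pred$ is contained in'' a propositional region of one assignment). My approach is to maintain $\pred$ as an invariant along each block and to discharge the equality \emph{incrementally} during the sweep: at address $\bar a$ I use the same random-access gadget, together with the descriptors' ability to probe both the current and the successor observation (via $\nxt{O}\phi$ and a tautological descriptor), to force agreement of $\pred$ at column $\bar a$ across the step, releasing the invariant only at the wrap-around that starts a new row. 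Getting this bookkeeping exactly right --- so that a tiling induces a satisfying computation and, conversely, the state-variable projection of any model is a compatible tiling --- is the delicate heart of the reduction. Finally, for model-checking I take a fixed, small \emph{universal} system $\mathit{Sys}$ whose agents admit every local state, channel, datum and send predicate, so that its computations range over all of $(\Exp{\sysvar}\times M)^{\omega}$; then $\phi$ is satisfiable iff $\mathit{Sys}\nmodels\neg\phi$, and as \expspace is closed under complement the hardness transfers verbatim.
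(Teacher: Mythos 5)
You follow the same route as the paper: a reduction from the \expspace-complete corridor tiling problem of width $2^n$, a cell-by-cell sweep driven by an $n$-bit column counter, the send predicate $\pred$ used as an exponential-width memory holding a row, and random access into it by comparing $\sdat$-atoms against $\scv$-atoms inside one descriptor (which the syntax indeed permits). Your characteristic-set encoding of rows, under which every subset of assignments denotes a well-defined row, is a tidy simplification that removes the functionality constraint the paper's sketch imposes on $\pred$. The gap is exactly in the step you single out as delicate: anchoring $\pred$ to the true neighbouring row. Your mechanism --- force agreement of $\pred_i$ and $\pred_{i+1}$ at the current column $\bar a$ and treat $\pred$ as an invariant of the block --- does not force any invariant. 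The quantifiers $\exis{O}$ and $\all{O}$ are monadic and local to a single observation; the assignment they bind cannot be referred to at the next time point, so neither $\pred_i=\pred_{i+1}$ nor even $\pred_i\subseteq\pred_{i+1}$ is expressible: a formula only sees, at each instant, which of its finitely many descriptor tests pass, and two distinct predicates sharing that profile are indistinguishable, which is unavoidable when there are $\dexp{\size{\scv}}$ predicates and only polynomially many tests. What \emph{is} expressible across a step is agreement at the polynomially many addresses computable from the current data/state variables (the current column and fixed offsets of it), via a finite disjunction over tiles --- and that is all your scheme uses. So each stored entry is protected only at the single step of the block at which its column is current; at the other $2^n-1$ steps it may change arbitrarily. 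A model can therefore pass all your checks while the entry at column $\bar a$ is corrupted somewhere between its write in row $r$ and its read in row $r+1$, and satisfiability no longer implies existence of a tiling: soundness fails. Overcoming precisely this persistence problem is what the paper's appendix construction is for (its sketch, too, says the descriptors must force $\pred$ to \qt{memorize a full row}); your plan does not contain an idea that achieves it.

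Your model-checking transfer is also unsound. In \rcp the predicate of an emitted observation is the sender's send guard instantiated with its current state, channel and data, so a system of polynomial description can emit at most singly-exponentially many distinct predicates, whereas $M$ contains $\dexp{\size{\scv}}$ of them. Hence there is no small \qt{universal} system whose computations range over all of $(\Exp{\sysvar}\times M)^{\omega}$; in particular no such system can produce the doubly-exponentially many row-encoding predicates your satisfiability reduction relies on. Model-checking hardness therefore needs its own construction rather than a verbatim transfer from satisfiability.
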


\begin{proof}
  The proof is in Appendix~\ref{sec:appendix}.
  Let $\Exp{m}$ be the width of the required tiling.
  The crux of the proof is in checking that two tiles that are
  $\Exp{m}$ apart from each other satisfy some consistency rules.
  We use common variables that can describes tiles and a number
  counting to at most $\Exp{m}-1$.
  A predicate over the common variables has a set of possible truth
  assignments to these variables.
  Thus, we can think about the predicate as a relation between
  locations (modulo $\Exp{m}$) and tiles.
  We use the observation descriptors to force the predicates to
  specify that this relation is actually a function and so that the
  function memorizes a full row in the tiling (i.e., $\Exp{m}$ tiles).
  Then, by looking on the predicate we can compare the current tile to
  a tile that appeared $\Exp{m}$ locations ago. 
\end{proof}
}


\section{Concluding Remarks}\label{sec:conc}
We introduced a formalism that combines message-passing and shared-memory to facilitate realistic modelling of distributed MAS. 
A system is defined as a set of distributed agents  
 that execute concurrently and only interact by message-passing. Each agent controls its local behaviour as in Reactive Modules~\cite{AH99b,FHNPSV11} while 
interacting externally by message passing as in 
$\pi$-calculus-like formalisms~\cite{pi1,info19}. Thus, we decouple the individual behaviour of an agent from its external interactions to facilitate reasoning about either one separately. We also make it easy to model interaction features of MAS, that may only be tediously hard-coded in existing formalisms.  

We introduced an extension to \ltl, named \ltal, that characterises messages and their targets. This way we may not only be able to reason about the intentions of agents in communication, but also we may explicitly specify their interaction protocols. Finally, we showed that the model-checking problem for \ltal is \expspace only 
with respect to the number of common variables and \pspace-complete with respect 
to the rest of the input.\smallskip

\noindent
{\bf Related works.}
As mentioned before, formal modelling is highly influenced by traditional
formalisms used for verification, see~\cite{AH99b,FHMV95}.
These formalisms are, however, very abstract in that their models
representations are very close to their mathematical interpretations
(i.e., the underlying transition systems).
Although this  may make it easy to conduct some logical
analysis~\cite{AHK02,CHP10,MMPV14} on models, it does imply that most of the
high-level MAS features may only be hard-coded, and thus leading to
very detailed models that may not be tractable or efficiently
implementable.
This concern has been already recognised and thus more formalisms have
been proposed, e.g., \emph{Interpreted Systems Programming Language}
(ISPL)~\cite{LQR17} and MOCHA~\cite{AHMQRT98} are proposed as
implementation languages of \emph{Interpreted Systems}
(IS)~\cite{FHMV95} and \emph{Reactive Modules} (RM)~\cite{AH99b} 
respectively.
They are still either fully synchronous or shared-memory based and thus do not support
flexible coordination and/or interaction interfaces.
A recent attempt to add dynamicity in this sense has been adopted by
\emph{visibly CGS} (vCGS)~\cite{BBDM19}: an extension of \emph{Concurrent-Game 
Structures} (CGS)~\cite{AHK02} to enable agents to dynamically hide/reveal 
their internal states to selected agents.
However, vCGS relies on an assumption of \cite{stop} which requires that agents 
know the identities of each other.
This, however, only works for closed systems with a fixed number of agents. 

Other attempts to add dynamicity and reconfiguration 
include dynamic I/O automata \cite{DBLP:journals/iandc/AttieL16},
Dynamic reactive modules of Alur and Grosu \cite{AG04},
Dynamic reactive modules of Fisher et
al.~\cite{FHNPSV11}, and open MAS~\cite{KLPP19}.
However, their main interest was in supporting dynamic creation of agents.
Thus, the reconfiguration of communication was not their main
interest. 
While \rcp may be easily extended to support dynamic creation of
agents, none of these formalisms may easily be used
to control the targets of communication and dissemination of
information.

As for logics we differ from traditional languages like \ltl and \ctl 
in that our formula may refer to messages and their constraints.
This is, however, different from the atomic labels of {\pdl}~\cite{FL79} and 
{modal $\mu$-calculus}~\cite{Koz83} in that \ltal mounts complex and structured 
observations on which designers may 
predicate on.
Thus the interpretation of a formula includes information about the causes of 
variable assignments and the interaction protocols among agents.
Such extra information may prove useful in developing compositional 
verification techniques.

\smallskip

\noindent
{\bf Future works.}  
We plan to provide tool support for \rcp and \ltal, but with a more 
user-friendly syntax.

We want to exploit the interaction mechanisms in \rcp and the extra
information in \ltal formulas to conduct verification
compositionally.
As mentioned, we believe that relating to sender intentions will
facilitate that. 

We intend to study the relation with respect to temporal epistemic logic 
\cite{HV89}.
Although we do not provide explicit knowledge operators, the combination of 
data exchange, receivers selection, and enabling/disabling of synchronisation 
allow agents to dynamically deduce information about each other.
Furthermore we want to extend \rcp to enable dynamic creation of agents while 
reconfiguring communication.

Finally, we want to target
the distributed synthesis problem~\cite{FS05}.
Several fragments of the problem have been proven to be decidable, e.g.,
when the information of agents is arranged 
hierarchically~\cite{BMMRV17}, the number of agents is limited~\cite{GPW18}, or 
the actions are made public~\cite{BLMR17a}.
We conjecture that the ability to disseminate information and reason about it might prove 
useful in this setting.


\bibliographystyle{ACM-Reference-Format}  
\bibliography{biblio}   

\end{document}